\newcommand\eqsize{\scriptsize}
\newcommand{\acron}{{{\sf\it VRASED}}\xspace}
\newcommand{\vrased}{{{\sf\it VRASED}}\xspace}
\newcommand{\dev}{{\ensuremath{\sf{\mathcal Prv}}}\xspace}
\newcommand{\vrf}{{\ensuremath{\sf{\mathcal Vrf}}}\xspace}
\newcommand{\RA}{{\ensuremath{\sf{RA}}}\xspace}
\newcommand{\chal}{{\ensuremath{\sf{\mathcal Chal}}}\xspace}
\newcommand{\smart}{{\small SMART}\xspace}
\newcommand{\hydra}{{\small HYDRA}\xspace}
\newcommand{\sel}{\texttt{seL4}\xspace}
\newcommand{\attkey}{\ensuremath{\mathcal K}\xspace}
\newcommand{\cmark}{\ding{51}}%
\mathchardef\mhyphen="2D
\newcommand{\hw}{\texttt{\small HW-Mod}\xspace}
\newcommand{\sw}{\texttt{\small SW-Att}\xspace}
\newcommand{\swtiny}{\texttt{\tiny SW-Att}\xspace}
\newcommand{\rom}{\texttt{ROM}\xspace}
\newcommand{\ram}{\texttt{RAM}\xspace}
\newcommand{\hmac}{HMAC\xspace}
\newcommand{\dmaaddr}{\ensuremath{DMA_{addr}}\xspace}
\newcommand{\dmaen}{\ensuremath{DMA_{en}}\xspace}
\newcommand\blfootnote[1]{%
  \begingroup
  \renewcommand\thefootnote{}\footnote{#1}%
  \addtocounter{footnote}{-1}%
  \endgroup
}
\newtheorem{definition}{Definition}
\newtheorem{lemma}{Lemma}
\newtheorem{theorem}{Theorem}
\begin{document}

\title{Formally Verified Hardware/Software Co-Design for Remote Attestation}

\author{
{\rm Ivan De Oliveira Nunes}\\
University of California, Irvine\\
ivanoliv@uci.edu
\and
{\rm Karim Eldefrawy}\\
SRI International\\
karim.eldefrawy@sri.com
\and
{\rm Norrathep Rattanavipanon}\\
University of California, Irvine\\
nrattana@uci.edu
\and
{\rm Michael Steiner}\\
Intel\\
michael.steiner@intel.com
\and
{\rm Gene Tsudik}\\
University of California, Irvine\\
gene.tsudik@uci.edu
} 

\maketitle

\begin{abstract}
Remote Attestation (\RA) is a distinct security service that allows a trusted verifier (\vrf) to
measure the software state of an untrusted remote prover (\dev).
If correctly implemented, \RA allows \vrf to remotely detect if \dev is in an illegal or 
compromised state. Although several \RA approaches have been explored (including hardware-based, 
software-based, and hybrid) and many concrete methods have been proposed,
comparatively little attention has been devoted to formal verification.
In particular, thus far, no \RA designs and no implementations have been formally verified with 
respect to claimed security properties.

In this work, we take the first step towards formal verification of \RA by designing and verifying 
an architecture called \acron: \underline{\bf V}erifiable \underline{\bf R}emote \underline{\bf A}ttestation 
for \underline{\bf S}imple \underline{\bf E}mbedded 
\underline{\bf D}evices. \acron instantiates a hybrid (HW/SW) \RA co-design aimed at low-end embedded systems, 
e.g., simple IoT devices. \acron provides a level of security comparable to HW-based approaches, while relying on 
SW to minimize additional HW costs. Since security properties must be jointly guaranteed by HW and SW, verification  is a challenging task, which has never been attempted before in the context of \RA.
We believe that \acron is the first formally verified \RA scheme.
To the best of our knowledge, it is also the first formal verification of a HW/SW implementation of any security service. To demonstrate \acron's practicality and low overhead, we instantiate and evaluate it on a commodity platform (TI MSP430).
\acron's publicly available implementation was deployed on the Basys3 FPGA.

%
\end{abstract}
\section{Introduction}\label{intro}
\blfootnote{\textbf{\texttt{To appear: USENIX Security 2019.\\ Title: VRASED: A Verified Hardware/Software Co-Design for Remote Attestation}}}	
The number and variety of special-purpose computing devices is increasing dramatically. This includes 
all kinds of embedded devices, cyber-physical systems (CPS) and Internet-of-Things (IoT) gadgets, that are utilized in various 
``smart'' settings, such as homes, offices, factories, automotive systems and public venues.
As society becomes increasingly accustomed to being surrounded by, and dependent on, such devices,  their security becomes extremely important.
For actuation-capable devices, malware can impact both security and safety, e.g., as demonstrated by Stuxnet~\cite{stuxnet}.
Whereas, for sensing devices, malware can undermine privacy by obtaining ambient information.
Furthermore, clever malware can turn vulnerable IoT devices into zombies that can become sources for DDoS attacks.
For example, in 2016, a multitude of compromised ``smart'' cameras and DVRs formed the Mirai Botnet~\cite{antonakakis2017understanding} 
which was used to mount a massive-scale DDoS attack (the largest in history).

Unfortunately, security is typically not a key priority for low-end device manufacturers, due to cost, size or power constraints.
It is thus unrealistic to expect such devices to have the means to prevent current and future malware attacks.
The next best thing is detection of malware presence. This typically requires some form of {\bf Remote Attestation (\RA)} -- a distinct security service for detecting malware on CPS, embedded and IoT devices.
\RA is especially applicable to low-end embedded devices that are
incapable of defending themselves against malware infection. This is in contrast to more powerful devices (both embedded and general-purpose) 
that can avail themselves of sophisticated anti-malware protection.
\RA involves verification of current internal state  (i.e., \ram and/or flash) of an 
untrusted remote hardware platform (prover or \dev) by a trusted entity (verifier or \vrf). If \vrf detects malware presence, \dev's software can 
be re-set or rolled back and out-of-band measures can be taken to prevent similar infections. In general, \RA can help \vrf establish a static or 
dynamic root of trust in \dev and can also be used to construct other security services, such as software updates~\cite{seshadri2006scuba} 
and secure deletion~\cite{perito2010secure}.
Hybrid \RA (implemented as a HW/SW co-design) is a particularly promising approach for low-end embedded devices.
It aims to provide the same security guarantees as (more expensive) hardware-based approaches, while minimizing modifications to the underlying hardware.

Even though numerous \RA techniques with different assumptions, security guarantees, and designs, have been proposed~\cite{seshadri2006scuba,perito2010secure,Viper2011,smart,trustlite,tytan,hydra,brasser2016remote,Sancus17,SAP,erasmus,smarm,ibrahim2017seed,Sancus17,ConAsiaCCS18},
a major missing aspect of \RA is the high-assurance and rigor derivable from utilizing (automated) formal verification to guarantee security of the design and implementation of \RA techniques.
Because all aforementioned architectures and their implementations are not systematically designed from abstract models, their soundness and security can not be formally argued.
In fact, our \RA verification efforts revealed that a previous hybrid \RA design -- \smart~\cite{smart} -- 
assumed that disabling interrupts is an atomic operation and hence opened the door to compromise of \dev's secret key in the window between the time of the invocation of disable interrupts functionality and the time when interrupts are actually disabled.
Another low/medium-end architecture -- Trustlite~\cite{trustlite} -- also does not achieve our formal definition of \RA soundness.
In particular, this architecture is vulnerable to self-relocating malware (See~\cite{carpent2018reconciling} for details).
Formal specification of \RA properties and their (automated) verification significantly increases our confidence that such subtle issues are not overlooked.

In this paper we take a ``verifiable-by-design'' approach and develop, from scratch, an architecture for \underline{\bf V}erifiable \underline{\bf R}emote \underline{\bf A}ttestation for \underline{\bf S}imple 
\underline{\bf E}mbedded \underline{\bf D}evices (\acron). \acron is the first formally specified and verified \RA architecture accompanied by a formally verified  
implementation. Verification is carried out for all trusted components, including hardware, software, and the composition of both, all the way up to end-to-end notions for \RA soundness and security.
The resulting verified implementation -- along with its computer proofs -- is publicly available \cite{public-code}.
Formally reasoning about, and verifying, \acron involves overcoming major challenges that have not been attempted in the context of \RA and, to the best of our knowledge, not attempted for any security service implemented as a HW/SW co-design.
These challenges include:

\vspace{1mm}

 \noindent\textbf{1 --} Formal definitions of: \texttt{(i)} end-to-end notions for \RA soundness and security; \texttt{(ii)} a realistic machine model for low-end embedded systems; and \texttt{(iii)} \acron's guarantees.
 These definitions must be made in single formal system that is powerful enough to provide a common ground for reasoning about their interplay.
 In particular, our end goal is to prove that the definitions for \RA soundness and security are implied by \acron's guarantees when applied to our machine model.
 Our formal system of choice is Linear Temporal Logic (LTL). A background on LTL and our reasons for choosing it are discussed in Section~\ref{preliminaries}.

 \vspace{1mm} 

 \noindent\textbf{2 --} Automatic end-to-end verification of complex systems such as \acron is challenging from the computability perspective, as the space of possible states is extremely large.
 To cope with this challenge, we take a ``divide-to-conquer'' approach.
 We start by dividing the end-to-end goal of \RA soundness and security into smaller sub-properties that are also defined in LTL.
 Each HW sub-module, responsible for enforcing a given sub-property, is specified as a Finite State Machine (FSM), and verified using a Model Checker. 
 \acron's SW relies on an F* verified implementation (see Section~\ref{fermat_sw}) which is also specified in LTL.
 This modular approach allows us to efficiently prove sub-properties enforced by individual building blocks in \acron.

 \vspace{1mm}

 \noindent\textbf{3 --} All proven sub-properties must be composed together in order to reason about \RA security and soundness of \acron as one whole system. 
To this end, we use a theorem prover to show (by using LTL equivalences) that the sub-properties that were proved for each of \acron's sub-modules, when composed,
 imply the end-to-end definitions of \RA soundness and security. 
 This modular approach enables efficient system-wide formal verification.

\subsection{The Scope of Low-End Devices}
This work focuses on low-end devices based on low-power single core microcontrollers with a few KBytes 
of program and data memory. A representative of this class of devices is the Texas Instrument's MSP430 
microcontroller (MCU) family~\cite{TI-MSP430}. It has a $16$-bit word size, 
resulting in $\approx64$ KBytes of addressable memory. SRAM is used as data memory and its size ranges between $4$ and 
$16$KBytes (depending on the specific MSP430 model), while the rest of the address space is used for program memory, e.g., 
\rom and Flash. MSP430 is a Von Neumann architecture processor with common data and code 
address spaces. It can perform multiple memory accesses within a single instruction; 
its instruction execution time varies from $1$ to $6$ clock cycles, and instruction length varies from $16$ to $48$ bits. 
MSP430 was designed for low-power and low-cost. It is widely used in many application domains, e.g., automotive 
industry, utility meters, as well as consumer devices and computer peripherals. 
Our choice is also motivated by availability of a well-maintained open-source MSP430 hardware design from Open Cores \cite{openmsp430}.
Nevertheless, our machine model is applicable to other low-end MCUs in the same class as MSP430 (e.g., Atmel AVR ATMega).
\subsection{Organization}
Section \ref{preliminaries} provides relevant background on \RA and automated verification.
Section \ref{approach} contains the details of the \acron 
architecture and an overview of the verification approach. Section \ref{verif} contains the formal definitions of end-to-end \RA soundness and security and the formalization of the necessary sub-properties
along with the implementation of verified components to realize such sub-properties. Due to space limitation, the proofs for end-to-end soundness and security derived from the sub-properties are discussed in Appendix A.
Section \ref{alternatives} discusses alternative designs to guarantee the same required properties and their trade-offs with the standard design.
Section \ref{eval} presents experimental results demonstrating the minimal overhead of the formally verified and synthesized components.
Section~\ref{RW} discusses related work. Section~\ref{conclusion} concludes with a summary of our results.
End-to-end proofs of soundness and security, optional parts of the design, \acron's API, and discussion on
\acron's prototype can be found in Appendices A to D.

\section{Background}\label{preliminaries}
This section overviews \RA and provides some background on computer-aided verification.
\subsection{\RA for Low-end Devices}\label{sec:ra_bg}
As mentioned earlier, \RA is a security service that facilitates detection of malware presence on 
a remote device. Specifically, it allows a trusted verifier (\vrf) to
remotely measure the software state of an untrusted remote device (\dev).
As shown in Figure~\ref{fig:timeline}, \RA is typically obtained via a simple challenge-response protocol:
\begin{compactenum}
	\item \vrf sends an attestation request containing a challenge (\chal) to \dev. This request might also contain a token derived from a secret that allows \dev to authenticate \vrf.
	\item \dev receives the attestation request and computes an {\em authenticated integrity check} over its memory and \chal.
	The memory region might be either pre-defined, or 
		explicitly specified in the request.
        In the latter case, authentication of \vrf in step (1) is paramount to the overall security/privacy of \dev, as the request can specify arbitrary memory regions.
	\item \dev returns the result to \vrf. 
	\item \vrf receives the result from \dev, and checks whether it corresponds to a valid memory state.
\end{compactenum}

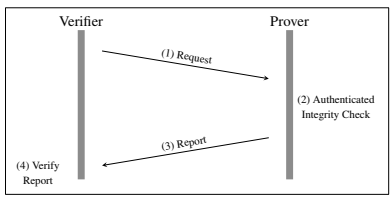
\begin{figure}[ht]
\centering
\resizebox{0.6\columnwidth}{!}{ 
	\fbox{
	\begin{tikzpicture}[node distance=1.5cm, >=stealth]
	\coordinate (BL)	at (0, 4.3);		\coordinate[left =6cm  of BL]	 (TL);
	\coordinate (Btvs)	at (0, 3.8);		\coordinate[left =6cm  of Btvs] (Ttvs);
	\coordinate (Btcs)	at (0, 2.8);		\coordinate[left =6cm  of Btcs] (Ttcs);
	\coordinate (Btce)	at (0, 1.3);		\coordinate[left =6cm  of Btce] (Ttce);
	\coordinate (Btvr)	at (0, 0.3);		\coordinate[left =6cm  of Btvr] (Ttvr);
	\coordinate (BR)	at (0, 0);		\coordinate[left =6cm  of BR]	 (TR);

	\node[above] at (BL) {\large Prover};
	\node[above] at (TL) {\large Verifier};
	\coordinate (Chksum) at ($(Btcs)!0.5!(Btce)$);
	\node [right = .1cm, align=center] at (Chksum) {\small (2) Authenticated \\ \small Integrity Check};
	\coordinate (Verify)	at ($(Ttvr)!0.5!(TR)$);
	\node [left =.5cm, align=center] at (Verify) {\small (4) Verify \\ \small Report};

	\draw[line width = .2cm, color=gray!90]	(BL) -- (BR);
	\draw[line width = .2cm, color=gray!90] (TL) -- (TR);
	\coordinate (ReqStart) at ($(Ttvs)!0.1!(Btcs)$);
	\coordinate (ReqEnd) at ($(Ttvs)!0.9!(Btcs)$);
	\draw[thick, ->] (ReqStart) -- (ReqEnd) node [above=-.05cm, midway, sloped] {\small(1) Request};
	\coordinate (RepStart) at ($(Btce)!0.1!(Ttvr)$);
	\coordinate (RepEnd) at ($(Btce)!0.9!(Ttvr)$);
	\draw[thick, ->] (RepStart) -- (RepEnd) node [above=-.05cm, midway, sloped] {\small(3) Report};
\end{tikzpicture}
	}
}
	\caption{\small Remote attestation (\RA) protocol}
	\label{fig:timeline}
\end{figure}

The {\em authenticated integrity check} can be realized as a Message Authentication Code (MAC) over \dev's memory.
However, computing a MAC requires \dev to have a unique secret key (denoted by \attkey) shared with \vrf. This \attkey 
must reside in secure storage,
where it is {\bf not} accessible to any software running on \dev, except for attestation code. Since most \RA threat models
assume a fully compromised software state on \dev, secure storage implies some level of hardware support.

Prior \RA approaches can be divided into three groups: software-based, hardware-based, and hybrid.
Software-based (or timing-based) \RA is the only viable approach for legacy devices with no hardware security
features. Without hardware support, it is (currently) impossible to guarantee that \attkey is not accessible by malware.
Therefore, security of software-based approaches~\cite{Seshadri:2005:PVC:1095809.1095812, Viper2011}
is attained by setting threshold communication delays between \vrf and \dev.
Thus, software-based \RA is unsuitable for multi-hop and jitter-prone communication, or settings where
a compromised \dev is aided (during attestation) by a more powerful
accomplice device. It also requires strong constraints and assumptions
on the hardware platform and attestation usage~\cite{KKWHAB2012,LiChGlPe2015}.
On the other extreme, hardware-based approaches require either i) \dev's attestation functionality to be housed entirely within 
dedicated hardware, e.g., Trusted Platform Modules (TPMs)~\cite{tpm}; or ii) modifications to the CPU  semantics or instruction sets to support the execution of trusted software, e.g., SGX~\cite{sgx} or TrustZone \cite{trustzone}.
Such hardware features are too expensive (in terms of physical area, energy consumption, and actual cost)
for low-end devices.%

While neither hardware- nor software-based approaches are well-suited for settings
where low-end devices communicate over the Internet (which is often the case in the IoT),
hybrid \RA (based on HW/SW co-design) is a more promising approach.
Hybrid \RA aims at providing the same security guarantees as hardware-based techniques with minimal hardware support.
\smart~\cite{smart} is the first hybrid \RA architecture targeting low-end MCUs.
In \smart, attestation's integrity check is implemented in software.
\smart's small hardware footprint guarantees that
the attestation code runs safely and that the attestation key is not leaked.
HYDRA~\cite{hydra} is a hybrid \RA scheme that relies on a secure boot hardware feature and on a secure micro-kernel.
Trustlite~\cite{trustlite} modifies Memory Protection Unit (MPU) and CPU exception engine hardware to implement \RA.
Tytan~\cite{tytan} is built on top of Trustlite, extending its capabilities for applications with real-time requirements.


Despite much progress, a major missing aspect in \RA research is high-assurance and rigor obtained by 
using formal methods to guarantee security of a concrete \RA design and its implementation
We believe that verifiability and formal security guarantees are particularly important for hybrid \RA designs 
aimed at low-end embedded and IoT devices, as their proliferation keeps growing. 
This serves as the main motivation for our efforts to develop the first formally verified \RA architecture.
\subsection{Formal Verification, Model Checking \& Linear Temporal Logic}
Computer-aided formal verification typically involves three basic steps.
First, the system of interest (e.g., hardware, software, communication protocol) must be described using a formal model, 
e.g., a Finite State Machine (FSM). Second, properties that the model should satisfy must be formally specified.
Third, the system model must be checked against formally specified properties to guarantee that the system retains 
such properties. This checking can be achieved via either Theorem Proving or Model Checking.
In this work, we use the latter and our motivation for picking it is clarified below.

In Model Checking, properties are specified as \textit{formulae} using Temporal Logic and system models are represented as  
FSMs. Hence, a system is represented by a triple $(S, S_0, T)$, where $S$ is a finite set of states,
$S_0 \subseteq S$ is the set of possible initial states, and $T \subseteq S \times S$ is the transition relation set, i.e., it describes 
the set of states that can be reached in a single step from each state. The use of Temporal Logic to specify properties allows representation 
of expected system behavior over time.

We apply the model checker NuSMV~\cite{cimatti2002nusmv}, which can be used to verify generic HW or SW models.
For digital hardware described at Register Transfer Level (RTL) -- which is the case in this work -- conversion
from Hardware Description Language (HDL) to NuSMV model specification is simple. Furthermore, it can be automated~\cite{irfan2016verilog2smv}.
This is because the standard RTL design already relies on describing hardware as an FSM.

In NuSMV, properties are specified in Linear Temporal Logic (LTL), which is particularly useful for verifying sequential 
systems. This is because it extends common logic statements with temporal clauses. In addition to propositional 
connectives, such as conjunction ($\land$), disjunction ($\lor$), negation ($\neg$), and implication ($\rightarrow$), 
LTL includes temporal connectives, thus enabling sequential reasoning. We are interested in the following temporal connectives:
\begin{compactitem}
 \item \textbf{X}$\phi$ -- ne\underline{X}t $\phi$: holds if $\phi$ is true at the next system state.
 \item \textbf{F}$\phi$ -- \underline{F}uture $\phi$: holds if there exists a future state where $\phi$ is true.
 \item \textbf{G}$\phi$ -- \underline{G}lobally $\phi$: holds if for all future states $\phi$ is true.
 \item $\phi$ \textbf{U} $\psi$ -- $\phi$ \underline{U}ntil $\psi$: holds if there is a future state where $\psi$ holds and
 $\phi$ holds for all states prior to that.
\end{compactitem}
This set of temporal connectives combined with propositional connectives (with their usual meanings) allows us to specify 
powerful rules. NuSMV works by checking LTL specifications against the system FSM for all reachable states in such FSM.
In particular, all \acron's desired security sub-properties are specified using LTL and verified by NuSMV.

\section{Overview of VRASED} \label{approach}
\acron is composed of a HW module (\hw) and a SW implementation  (\sw) of \dev's behavior according to the \RA protocol.
\hw enforces access control to \attkey in addition to secure and atomic execution of \sw (these properties are discussed in detail below).
\hw is designed with minimality in mind. 
The verified FSMs contain a minimal state space, which keeps hardware cost low.
\sw is responsible for computing an attestation report. 
As \acron's security properties are jointly enforced by \hw and \sw, both must be verified to ensure that the overall
design conforms to the system specification.

\subsection{Adversarial Capabilities \& Verification Axioms}
\label{subsec:axioms}
We consider an adversary, \adv, that can control the entire software state, code, and data of \dev.
\adv\ can modify any writable memory and read any memory that is not explicitly protected by access control rules, i.e.,
it can read anything (including secrets) that is not explicitly protected by \hw. It can also 
re-locate malware from one memory segment to another, in order to hide it from being detected. 
\adv\ may also have full control over all Direct Memory Access (DMA) controllers on \dev. 
DMA allows a hardware controller to directly access main memory (e.g., \ram, flash or \rom) without going through the CPU.

We focus on attestation functionality of \dev; verification of the entire MCU architecture is beyond the scope of this paper.
Therefore, we assume the MCU architecture strictly adheres to, and correctly implements, its specifications.
In particular, our verification approach relies on the following simple axioms:
\begin{compactitem}
 \item \textbf{A1 - Program Counter:} The program counter ($PC$) always contains the address of the instruction being 
 executed in a given cycle.
 \item \textbf{A2 - Memory Address:} Whenever memory is read or written, a data-address signal ($D_{addr}$) contains 
 the address of the corresponding memory location. For a read access, a data read-enable bit ($R_{en}$) must be set, and 
 for a write access, a data write-enable bit ($W_{en}$) must be set.
 \item \textbf{A3 - DMA:} Whenever a DMA controller attempts to access main system memory, a DMA-address signal 
 (\dmaaddr) reflects the address of the memory location being accessed and a DMA-enable bit (\dmaen) must be set. 
 DMA can not access memory when \dmaen is off (logical zero).
 \item \textbf{A4 - MCU reset:} At the end of a successful $reset$ routine, all registers (including $PC$) are set to zero 
 before resuming normal software execution flow. Resets are handled by the MCU in hardware; thus, reset handling routine 
 can not be modified.
 \item \textbf{A5 - Interrupts:} When interrupts happen, the corresponding $irq$ signal is set.
\end{compactitem}
\noindent \emph{\textbf{Remark:} Note that Axioms \textbf{A1} to \textbf{A5} are satisfied by the OpenMSP430 design.}

\sw uses the HACL*~\cite{hacl} \hmac-SHA256 function which is implemented and verified in F*\footnote{\url{https://www.fstar-lang.org/}}.
F* can be automatically translated to C and the proof of correctness for the translation is provided in~\cite{protzenko2017verified}.
However, even though efforts have been made to build formally verified C compilers (CompCert~\cite{compcert} is the most 
prominent example), there are currently no verified compilers targeting lower-end MCUs, such as MSP430. Hence, we assume that 
the standard compiler can be trusted to semantically preserve its expected behavior, especially with respect to the following:
\begin{compactitem}
 \item \textbf{A6 - Callee-Saves-Register:} Any register touched in a function is cleaned by default when the function 
 returns.
 \item \textbf{A7 - Semantic Preservation:} Functional correctness of the verified \hmac implementation 
 in C, when converted to assembly, is semantically preserved.
\end{compactitem}
\noindent \emph{\textbf{Remark:} Axioms \textbf{A6} and \textbf{A7} reflect the corresponding compiler specification (e.g., msp430-gcc).}

Physical hardware attacks are out of scope in this paper. Specifically, \adv\ can not modify code stored in \rom, induce 
hardware faults, or retrieve \dev secrets via physical presence side-channels. Protection against physical attacks is considered orthogonal 
and could be supported via standard tamper-resistance techniques~\cite{ravi2004tamper}.

\subsection{High-Level Properties of Secure Attestation}\label{high_prop}
We now describe, in high level, the sub-properties required for \RA.
In section~\ref{verif}, we formalize these sub-properties in LTL and provide single end-to-end definitions for \RA soundness and security.
Then we prove that \acron's design satisfies the aforementioned sub-properties and that the end-to-end definitions for soundness and security are implied by them.
The properties, shown in Figure~\ref{fig:properties}, fall into two groups: \emph{key protection} and \emph{safe execution}.\\

\begin{figure}
\centering
\includegraphics[width=0.7\columnwidth]{./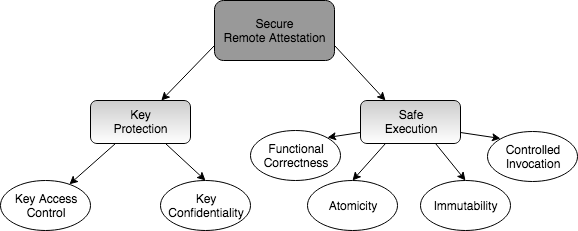}
\caption{Properties of secure \RA.}\label{fig:properties}
\end{figure}

\noindent \textbf{Key Protection:}

As mentioned earlier, \attkey must not be accessible by regular software running on \dev. 
%
To guarantee this, the following features must be correctly implemented:
\begin{compactitem}
 \item \textbf{P1- Access Control:} \attkey can only be accessed by \sw.
 \item \textbf{P2- No Leakage:} Neither \attkey (nor any function of \attkey
   other than the correctly computed HMAC) can remain in unprotected memory or registers after execution of \sw.
 \item \textbf{P3- Secure Reset:} Any memory tainted by \attkey and all registers (including PC) must be erased (or be inaccessible to regular 
 	software) after MCU reset. Since a reset might be triggered during \sw execution, lack of this property could result in leakage of 
	privileged information about the system state or \attkey.
	Erasure of registers as part of the reset ensures that no state from a previous execution persists. Therefore, the system 
	must return to the default initialization state.\\
\end{compactitem}

\noindent \textbf{Safe Execution:}

Safe execution ensures that \attkey is properly and securely used by 
\sw for its intended purpose in the \RA protocol. Safe execution can be divided into four sub-properties:

\begin{figure}
\centering
\resizebox{0.5\columnwidth}{!}{%
	\begin{tikzpicture}[node distance=1.5cm, >=stealth]
\tikzset{
  msp430/.style={
    draw,
    rectangle,
    minimum height=5cm,
    minimum width=3cm,
    align=center
  },
  membb/.style={
    draw,
    rectangle,
    minimum height=9cm,
    minimum width=1.5cm,
    align=center
  },
  mem/.style={
    draw,
    rectangle,
    minimum height=.5cm,
    minimum width=1.5cm,
    align=center
  },
  fermat/.style={
    draw,
    rectangle,
    minimum height=3cm,
    minimum width=1.5cm,
    align=center
  },
  arr/.style={->,>=latex'},
  three sided/.style={
        draw=none,
        append after command={
            [shorten <= -0.5\pgflinewidth]
            ([shift={( 0.5\pgflinewidth,-0.5\pgflinewidth)}]\tikzlastnode.north west)
        edge([shift={( 0.5\pgflinewidth,+0.5\pgflinewidth)}]\tikzlastnode.south west)            
            ([shift={( 0.5\pgflinewidth,+0.5\pgflinewidth)}]\tikzlastnode.south west)
        edge([shift={(-1.0\pgflinewidth,+0.5\pgflinewidth)}]\tikzlastnode.south east)       
            ([shift={( 0.5\pgflinewidth,+0.5\pgflinewidth)}]\tikzlastnode.south east)
        edge([shift={(-1.0\pgflinewidth,+0.5\pgflinewidth)}]\tikzlastnode.north east)
        }
    }
}
	\node[msp430] (msp) at (0,0) {\large MCU CORE};
        \node[membb] (bb) at (3.5,-2) {\shortstack{MEM.\\\ \\\ \\\ BACK-\\BONE}}; 
        \node[fermat, below = .2cm of bb.south] (fermat) {\hw};
        \node[mem, minimum height=2cm] (srom) at (6, 1.5) {\sw};
        \node[mem, below = .2cm of srom] (krom) {\attkey};
        \node[mem, below = .2cm of krom, minimum height=1cm] (sstack) {\shortstack{\sw \\ STACK \\ (XS)}};
        \node[mem, three sided, below = 0cm of sstack.south, minimum height=3.8cm] (ram) {\shortstack{App. \\ Avail. \\ RAM}};
        \node[mem, below = .2cm of ram, minimum height=4.05cm] (flash) {App. \\ Code};

        \path (msp.east) +(0,.3) coordinate (msp_2);
        \path (msp.east) +(0,-.3) coordinate (msp_3);
        \draw[thick, <-] (msp_2) -- (msp_2 -| bb.west);
        \draw[thick, ->] (msp_3) -- (msp_3 -| bb.west);
        
        \path (msp.south) +(.5,0) coordinate (msp_0);
        \path (msp.south) +(-.5,0) coordinate (msp_1);
        \path (fermat.west) +(0,.5) coordinate (fermat_0);
        \path (fermat.west) +(0,-.5) coordinate (fermat_1);
        \draw[thick, ->] (msp_0) |- (fermat_0); 
        \node[inner sep=5pt,right, fill=white,draw] at ($(msp_0) + (-.75,-2)$) {\small\shortstack{$PC$,\\$irq$,\\$R_{en}$,\\$W_{en}$,\\$D_{addr}$,\\$\dmaen$,\\$\dmaaddr$}};
        \draw[thick, <-] (msp_1) |- (fermat_1) ; 
        \node[inner sep=6pt,right, fill=white, draw] at ($(fermat_1) + (-2,0)$) {$reset$};
        
        \path (krom.west) +(0,.1) coordinate (krom_0);
        \path (krom.west) +(0,-.1) coordinate (krom_1);
        \draw[thick, ->] (krom_0) -- (krom_0 -| bb.east);
        \draw[thick, <-] (krom_1) -- (krom_1 -| bb.east);
        
        \path (srom.west) +(0,.1) coordinate (srom_0);
        \path (srom.west) +(0,-.1) coordinate (srom_1);
        \draw[thick, ->] (srom_0) -- (srom_0 -| bb.east); 
        \draw[thick, <-] (srom_1) -- (srom_1 -| bb.east); 
        
        \path (sstack.west) +(0,.1) coordinate (sstack_0);
        \path (sstack.west) +(0,-.1) coordinate (sstack_1);
        \draw[thick, ->] (sstack_0) -- (sstack_0 -| bb.east); 
        \draw[thick, <-] (sstack_1) -- (sstack_1 -| bb.east); 
        
        \path (ram.west) +(0,.1) coordinate (ram_0);
        \path (ram.west) +(0,-.1) coordinate (ram_1);
        \draw[thick, ->] (ram_0) -- (ram_0 -| bb.east); 
        \draw[thick, <-] (ram_1) -- (ram_1 -| bb.east); 
        
        \path (flash.north west) +(0,-.2) coordinate (flash_0);
        \path (flash.north west) +(0,-.4) coordinate (flash_1);
        \draw[thick, ->] (flash_0) -- (flash_0 -| bb.east); 
        \draw[thick, <-] (flash_1) -- (flash_1-| bb.east); 
        
        \draw [decorate,decoration={brace,amplitude=10pt,mirror,raise=4pt},yshift=0pt] 
        (krom.south east) -- (srom.north east) node [black,midway,xshift=.9cm] {\footnotesize ROM};
        
        \draw [decorate,decoration={brace,amplitude=10pt,mirror,raise=4pt},yshift=0pt] 
        (ram.south east) -- (sstack.north east) node [black,midway,xshift=.9cm] {\footnotesize RAM};
        
        \draw [decorate,decoration={brace,amplitude=10pt,mirror,raise=4pt},yshift=0pt] 
        (flash.south east) -- (flash.north east) node [black,midway,xshift=1cm] {\footnotesize FLASH};
\end{tikzpicture}
	}
\caption{VRASED system architecture}\label{fig:arch}
\end{figure}
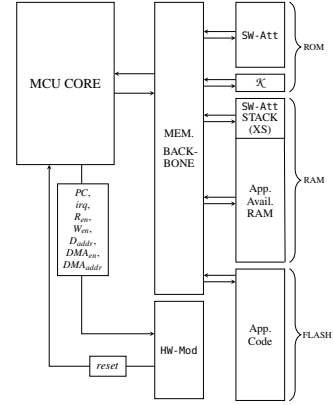

\begin{compactitem}
 \item \textbf{P4- Functional Correctness:} \sw must implement expected behavior of \dev's role in the \RA protocol.
 For instance, if \vrf expects a response containing an \hmac of memory in address range $[A,B]$, \sw implementation 
 should always reply accordingly. Moreover, \sw must always finish in finite time, regardless of input size and other parameters. 
 \item \textbf{P5- Immutability:} \sw executable must be immutable.
 Otherwise, malware residing in \dev could modify \sw, e.g., to always generate valid \RA measurements or to leak \attkey.
 \item \textbf{P6- Atomicity:} \sw execution can not be interrupted. The first reason for atomicity is to prevent leakage of 
 intermediate values in registers and \sw's data memory (including locations that could leak functions of \attkey) during \sw execution.
 This relates to \textbf{P2} above. The second reason is to prevent roving malware from relocating itself to escape being measured by \sw.
 \item \textbf{P7- Controlled Invocation:} \sw must always start from the first instruction and execute until the last instruction.
 Even though correct implementation of \sw is guaranteed by \textbf{P4}, isolated execution of chunks of a correctly implemented 
 code could lead to catastrophic results. Potential ROP attacks could be constructed using gadgets of \sw (which, based on \textbf{P1}, have 
 access to \attkey) to compute valid attestation results. 
\end{compactitem}

Beyond aforementioned core security properties, in some settings, \dev might need to authenticate \vrf's 
attestation requests in order to mitigate potential DoS attacks on \dev.
This functionality is also provided (and verified) as an optional feature in the design of \acron.
The differences between the standard design and the one with support for \vrf authentication are 
discussed in Appendix B.
\subsection{System Architecture}\label{sys-arch}
\acron architecture is depicted in Figure~\ref{fig:arch}. \acron is implemented by adding \hw to the MCU architecture, e.g., MSP430.
MCU memory layout is extended to include Read-Only Memory (\rom) that houses \sw code and  
\attkey used in the \hmac computation.
Because \attkey and \sw code are stored in \rom, we have guaranteed immutability, i.e., \textbf{P5}.
\acron also reserves a fixed part of the memory address space for \sw stack.
This amounts to $\approx3\%$ of the address space, as 
discussed in Section~\ref{eval} \footnote{A separate region in \ram is not strictly required. Alternatives and trade-offs 
are discussed in Section~\ref{alternatives}}. Access control to dedicated memory regions, as well as \sw atomic 
execution are enforced by \hw. The memory backbone is extended to support multiplexing of the new memory regions.
\hw takes $7$ input signals from the MCU core: $PC$, $irq$, $D_{addr}$, $R_{en}$, $W_{en}$, \dmaaddr and \dmaen. 
These inputs are used to determine a one-bit $reset$ signal output, that, when set to $1$, resets the MCU core immediately, i.e., 
before execution of the next instruction.
The $reset$ output is triggered when \hw detects any violation of security properties
\footnote{Resets due to \acron violations do not give malware advantages as malware can always trigger resets on the unmodified MCU by inducing software faults.}.

\subsection{Verification Approach}\label{verif_metho}
\begin{figure}[!hbtp]
\centering
\includegraphics[width=0.5\columnwidth]{./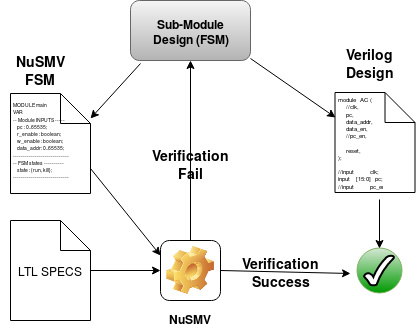}
\caption{\acron's submodule verification}\label{fig:module_verif}
\end{figure}
An overview of \hw verification is shown in Figures~\ref{fig:module_verif} and~\ref{fig:composition_verif}.
We start by formalizing \RA sub-properties discussed in this section using Linear Temporal Logic (LTL) 
to define invariants that must hold throughout the entire system execution.
\hw is implemented as a composition of sub-modules written in the Verilog hardware description language (HDL).
Each sub-module implements the hardware responsible for ensuring a given subset of the LTL specifications.
Each sub-module is described as an FSM in: (1) Verilog at Register Transfer Level (RTL); and
(2) the Model-Checking language SMV~\cite{cimatti2002nusmv}. We then use the NuSMV model checker to verify 
that the FSM complies with the LTL specifications. If verification fails, the sub-module is re-designed.

Once each sub-module is verified, they are combined into a single Verilog design.
The composition is converted to SMV using the automatic translation tool Verilog2SMV~\cite{irfan2016verilog2smv}.
The resulting SMV is simultaneously verified against all LTL specifications to prove that the final Verilog design for 
\hw complies with all secure \RA properties.

We clarify that the individual SMV sub-modules' design and verification steps are not strictly required in the verification pipeline.
This is because verifying SMV that is automatically translated from the composition of \hw would suffice.
Nevertheless, we design FSMs in SMV first so as to facilitate sub-modules' development and reasoning with an early additional check before going into their actual implementation and composition in Verilog.

\noindent \emph{\textbf{Remark:} Automatic conversion of the composition of \hw from Verilog to 
SMV rules out the possibility of human mistakes in representing Verilog FSMs as SMV.}\\

For the \sw part of \acron, we use the HMAC-SHA-256 from the HACL* library~\cite{hacl} to compute an authenticated
intregrity check of attested memory and \chal received from \vrf.
This function is formally verified with respect to memory safety, functional correctness, and cryptographic security.
However, key secrecy properties (such as clean-up of memory tainted by the key) are not formally verified in HACL* and thus must be ensured by \hw.%

As the last step, we prove that the conjunction of the LTL properties guaranteed by \hw and \sw implies soundness and security of the \RA architecture.
These are formally specified in Section~\ref{end-to-end-specs}.

\begin{figure}[!hbtp]
\centering
\includegraphics[width=0.5\columnwidth]{./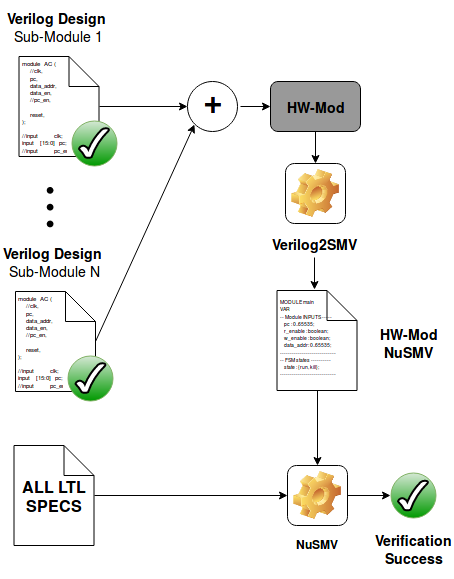}
\caption{Verification framework for the composition of sub-modules (HW-Mod).}\label{fig:composition_verif}
\end{figure}

\section{Verifying VRASED}\label{verif}
In this section we formalize \RA sub-properties.
For each sub-property, we represent it as a set of LTL specifications and construct an FSM that is verified to conform to such specifications.
Finally, the conjunction of these FSMs is implemented in Verilog HDL and translated to SMV using Verilog2SMV.
The generated SMV description for the conjunction is proved to simultaneously hold for all specifications.
We also define end-to-end soundness and security goals which are derived from the verified sub-properties (See Appendix A for the proof).

\subsection{Notation}
To facilitate generic LTL specifications that represent \acron's architecture (see Figure~\ref{fig:arch}) we use the following:
\begin{compactitem}
\item $AR_{min}$ and $AR_{max}$: first and last physical addresses of the memory region to be attested;
\item $CR_{min}$ and $CR_{max}$: physical addresses of first and last instructions of $\sw$ in \rom;
\item $K_{min}$ and $K_{max}$: first and last physical addresses of the \rom region where \attkey is stored;
\item $XS_{min}$ and $XS_{max}$: first and last physical addresses of the \ram region reserved for \sw computation;
\item $MAC_{addr}$: fixed address that stores the result of \sw computation (\hmac);
\item $MAC_{size}$: size of \hmac result;
\end{compactitem}
Table~\ref{tab:notation} uses the above definitions and summarizes the notation used in our LTL specifications throughout the rest of this paper.

To simplify specification of defined security properties, we use $[A,B]$ to denote a contiguous memory region 
between $A$ and $B$. Therefore, the following equivalence holds:
\begin{equation}
\eqsize
 C \in [A,B] \Leftrightarrow (C \leq B \land C \geq A)
\end{equation}
For example, expression $PC \in CR$ holds when the current value of $PC$ signal is within $CR_{min}$ and $CR_{max}$, 
meaning that the MCU is currently executing an instruction in CR, i.e, a \sw instruction.
This is because in the notation introduced above: $PC \in CR \Leftrightarrow PC \in [CR_{min},CR_{max}]  
\Leftrightarrow  (PC \leq CR_{max} \land PC \geq CR_{min})$.\\

\noindent\textbf{FSM Representation.}
As discussed in Section~\ref{approach}, \hw sub-modules are represented as FSMs that are verified to hold for LTL specifications.
These FSMs correspond to the Verilog hardware design of \hw sub-modules.
The FSMs are implemented as Mealy machines, where output changes at any time as a function of both 
the current state and current input values\footnote{This is in contrast with Moore machines where the output is 
defined solely based on the current state.}. Each FSM has as inputs a subset of the following signals and wires: 
$\{PC$, $irq$, $R_{en}$,$W_{en}$, $D_{addr}$, \dmaen, $\dmaaddr\}$.

Each FSM has only one output, $reset$, that indicates whether any security property was violated.
For the sake of presentation, we do not explicitly represent the value of the $reset$ output for each state.
Instead, we define the following implicit representation:
\begin{compactenum}
 \item $reset$ output is 1 whenever an FSM transitions to the $Reset$ state;
 \item $reset$ output remains 1 until a transition leaving the $Reset$ state is triggered;
 \item $reset$ output is 0 in all other states.
\end{compactenum}

\begin{table}
\centering
\caption{Notation summary}
\label{tab:notation}
\scriptsize{
\resizebox{\linewidth}{!}{%
\begin{tabular}{|l|p{9cm}|}
\hline
Notation    				&  Description  								\\ \hline \hline
$PC$					&  Current Program Counter value			\\ &\\
$R_{en}$					&  Signal that indicates if the MCU is reading from memory (1-bit)		\\ &\\
$W_{en}$					&  Signal that indicates if the MCU is writing to memory (1-bit)		\\ &\\
$D_{addr}$				& Address for an MCU memory access 		\\ &\\
\dmaen				&  Signal that indicates if DMA is currently enabled (1-bit)				\\ &\\
\dmaaddr				&  Memory address being accessed by DMA, if any 				\\ &\\
$CR$					&  (Code \rom) Memory region where \sw is stored: \\
						& $CR = [CR_{min}, CR_{max}]$   \\ & \\
$KR$					&  (\attkey \rom) Memory region where \attkey is stored: $KR = [K_{min}, K_{max}]$ \\ & \\
$XS$ 					&  (eXclusive Stack) secure \ram region reserved for \sw computations: $XS = [XS_{min},XS_{max}]$ \\ &\\
$MR$					& (MAC \ram) \ram region in which \sw computation result is written: $MR = [MAC_{addr},MAC_{addr} + MAC_{size}-1]$. The same region is also used to pass the attestation challenge as input to \sw \\ &\\
$AR$					& (Attested Region) Memory region to be attested. Can be fixed/predefined or specified in an authenticated request from \vrf: $AR = [AR_{min}, AR_{max}]$ \\ &\\
$reset$ 					& A 1-bit signal that reboots the MCU when set to logic $1$ \\ &\\
\hline \hline 
\textbf{A1, A2, ..., A7}		& Verification axioms (outlined in section \ref{subsec:axioms})\\ &\\
\textbf{P1, P2, ..., P7}		& Properties required for secure \RA (outlined in section \ref{high_prop})\\
\hline
\end{tabular}
}
}
\end{table}

\subsection{Formalizing \RA Soundness and Security}\label{end-to-end-specs}

We now define the notions of soundness and security.
Intuitively, \RA soundness corresponds to computing an integrity ensuring function over memory at time $t$.
Our integrity ensuring function is an \hmac computed on memory $AR$ with a one-time key derived from \attkey and \chal.
Since \sw computation is not instantaneous, \RA soundness must ensure that attested memory does not change during computation of the \hmac. This is the notion of temporal consistency in remote attestation~\cite{ConAsiaCCS18}.
In other words, the result of \sw call must reflect the entire state of the attested memory at the time when \sw is called. This notion is captured in LTL by Definition~\ref{ra_soundness}.

\begin{figure}[!ht]
\begin{mdframed}
\scriptsize
\begin{definition}{End-to-end definition for soundness of \RA computation}\label{ra_soundness}
\begin{align*}
\tiny
\begin{split}
 \textbf{G}:\ \{~& PC = CR_{min} \land AR = M  \land MR = \chal \ \land \ [(\neg reset) \ \textbf{U} \ (PC = CR_{max})] \rightarrow \\
 & \textbf{F}:\ [PC = CR_{max} \land MR = HMAC(KDF(\attkey,\chal), M)]\ \} \\
\end{split}
\end{align*}
where M is any AR value and KDF is a secure key derivation function.
\end{definition}
\end{mdframed}
\end{figure}

In Definition~\ref{ra_soundness}, $PC = CR_{min}$ captures the time when \sw is called (execution of its first instruction). $M$ and \chal are the values of $AR$ and $MR$.
From this pre-condition, Definition~\ref{ra_soundness} asserts that there is a time in the future when \sw computation finishes and, at that time, $MR$ stores the result of $HMAC(KDF(\attkey,\chal), M)$.
Note that, to satisfy Definition~\ref{ra_soundness}, \chal and $M$ in the resulting \hmac must correspond to the values in $AR$ and $MR$, respectively, when \sw was called.

\RA security is defined using the security game in Figure~\ref{fig:security_game}.
It models an adversary \adv\ (that is a probabilistic polynomial time, ppt, machine) that has full control of the software state of \dev (as the one described in Section~\ref{subsec:axioms}).
It can modify $AR$ at will and call \sw a polynomial number of times in the security parameter (\attkey and \chal bit-lengths). However, \adv\ can not modify \sw code, which is stored in immutable memory.
The game assumes that \adv\ does not have direct access to \attkey, and only learns \chal after it receives from \vrf as part of the attestation request.

\begin{figure}[!ht]
\begin{mdframed}
\scriptsize{
\begin{definition}\label{sec_def}~\\
\textbf{\ref{sec_def}.1 \RA Security Game (\RA-game):}

\texttt{Assumptions:}

- \swtiny is immutable, and \attkey is not known to \adv\\
- $l$ is the security parameter and $|\attkey| = |\chal| = |MR| = l$ \\
- $AR(t)$ denotes the content in $AR$ at time $t$\\
- \adv\ can modify $AR$ and $MR$ at will; however, it loses its ability to modify them while \swtiny is running 

\hrule
\vspace{2mm}

\texttt{\RA-game:}
	\begin{compactenum}
	 \item \texttt{Setup:} \adv\ is given oracle access to \swtiny.
     \item \texttt{Challenge:} A random challenge $\chal \leftarrow \$\{0,1\}^l$ is generated and given to \adv.
     \adv\ continues to have oracle access to \swtiny. 
     \item \texttt{Response:} Eventually, \adv\ responds with a pair $(M, \sigma)$, where $\sigma$ is either forged by \adv, or the result of calling \swtiny at some arbitrary time $t$.
     \item \adv\ wins if and only if $\sigma = HMAC(KDF(\attkey, \chal), M)$ and $M \neq AR(t)$.
	\end{compactenum}
	
\hrule
\vspace{2mm}
	
\textbf{\ref{sec_def}.2 \RA Security Definition:}\\
An \RA protocol is considered secure if there is no ppt \adv, polynomial in $l$, capable of winning the game defined in \ref{sec_def}.1 with
$Pr[\adv, \text{\RA-game}] > negl(l)$
\end{definition}
}
\end{mdframed}
\caption{\RA security definition for \acron} \label{fig:security_game}
\end{figure}

In the following sections, we define \sw functional correctness, LTL specifications~\ref{eq:AC_rule}-\ref{eq:dma-at} and formally verify that \acron's design guarantees such LTL specifications.
We define LTL specifications from the intuitive properties discussed in Section~\ref{high_prop} and depicted in Figure~\ref{fig:properties}.
In Appendix A we prove that the conjunction of such properties achieves soundness (Definition~\ref{ra_soundness}) and security (Definition \ref{sec_def}).
For the security proof, we first show that \acron guarantees that \adv\ can never learn \attkey with more than negligible probability, thus satisfying the assumption in the security game.
We then complete the proof of security via reduction, i.e., show that existence of an adversary that wins the game in Definition \ref{sec_def} implies 
the existence of an adversary that breaks the conjectured existential unforgeability of \hmac.

\noindent \emph{\textbf{Remark:} The rest of this section focuses on conveying the intuition behind the specification of LTL sub-properties.
Therefore, our references to the MCU machine model are via Axioms \textbf{A1 - A7} which were described in high level.
The interested reader can find an LTL machine model formalizing these notions in Appendix~A,
where we describe how such machine model is used construct computer proofs for Definitions~\ref{ra_soundness} and \ref{sec_def}.}

\subsection{\acron \sw} \label{fermat_sw}
To minimize required hardware features, hybrid \RA approaches implement integrity ensuring functions (e.g., \hmac) in software.
\acron's \sw implementation is built on top of HACL*'s \hmac implementation~\cite{hacl}.
HACL* code is verified to be functionally correct, memory safe and secret independent.
In addition, all memory is allocated on the stack making it predictable and deterministic.

\sw is simple, as depicted in Figure~\ref{fig:sw_att_code}.
It first derives a new unique context-specific key (${\sf\it key}$) from the master key (\attkey) by computing
an HMAC-based key derivation function, HKDF~\cite{krawczyk2010hmac}, on
\chal. 
This key derivation can be extended to incorporate attested memory boundaries
if \vrf specifies the range (see Appendix B).
Finally, it calls HACL*'s \hmac, using {\sf\it key } as the \hmac key.
$ATTEST\_DATA\_ADDR$ and $ATTEST\_SIZE$ specify the memory range to be attested ($AR$ in our notation).
We emphasize that \sw resides in \rom, which guarantees \textbf{P5} under the assumption 
of no hardware attacks.
Moreover, as discussed below, \hw enforces that no other software running on 
\dev can access memory allocated by \sw code, e.g., $key[64]$ buffer allocated in line 2 of Figure~\ref{fig:sw_att_code}.

\lstset{language=C,
	basicstyle={\scriptsize\ttfamily},
	showstringspaces=false,
	frame=single,
	xleftmargin=2em,
	framexleftmargin=3em,
	numbers=left, 
	numberstyle=\tiny,
	commentstyle={\tiny\itshape},
	keywordstyle={\tiny\bfseries},
	keywordstyle=\color{blue}\tiny\ttfamily,
	stringstyle=\color{red}\tiny\ttfamily,
        commentstyle=\color{black}\tiny\ttfamily,
        morecomment=[l][\color{magenta}]{\#},
        breaklines=true
}
\begin{figure}
\begin{lstlisting}[basicstyle=\tiny, numberstyle=\tiny]
void Hacl_HMAC_SHA2_256_hmac_entry() {
    uint8_t key[64] = {0};
    memcpy(key, (uint8_t*) KEY_ADDR, 64);
    hacl_hmac((uint8_t*) key, (uint8_t*) key, (uint32_t) 64, (uint8_t*) CHALL_ADDR, (uint32_t) 32);
    hacl_hmac((uint8_t*) MAC_ADDR, (uint8_t*) key, (uint32_t) 32, (uint8_t*) ATTEST_DATA_ADDR, (uint32_t) ATTEST_SIZE);
    return();
}
\end{lstlisting}
\caption{\sw C Implementation}%
\label{fig:sw_att_code}%
\end{figure}

HACL*'s verified \hmac is the core for guaranteeing \textbf{P4} (Functional Correctness) in \acron's design.
\sw functional correctness means that, as long as the memory regions storing values used in \sw computation ($CR$, $AR$, and $KR$) do not change during its computation,
the result of such computation is the correct HMAC. This guarantee can be formally expressed in LTL as in Definition~\ref{sw_fc}.
We note that since HACL*'s \hmac functional correctness is specified in F*, instead of LTL, we manually convert its guarantees to the LTL expressed by Definition~\ref{sw_fc}.
By this definition, the value in $MR$ does not need to remain the same, as it will eventually be overwritten by the result of \sw computation.

\begin{figure}[!ht]
\begin{mdframed}
\scriptsize
\begin{definition}{\swtiny functional correctness}\label{sw_fc}
 \begin{align*}
 \tiny
 \begin{split}
 \textbf{G}: \ \{~&
 PC = CR_{min} \land MR = \chal \ \land [(\neg reset \ \land \ \neg irq \ \land \ CR={\swtiny} \ \land \ KR = \attkey \ \land \ AR=M) \ \textbf{U} \ PC = CR_{max}] \\
 &\rightarrow \textbf{F}: \ [PC = CR_{max} \land MR = HMAC(KDF(\attkey,\chal), M)] \ \}
 \end{split}
 \end{align*}
 where M is any arbitrary value for AR.
\end{definition}
\end{mdframed}
\end{figure}

In addition, some HACL* properties, such as stack-based and deterministic memory allocation, are used in alternative 
designs of \acron to ensure \textbf{P2} -- see Section~\ref{alternatives}.

Functional correctness implies that the \hmac implementation conforms to its published standard specification 
on all possible inputs, retaining the specification's cryptographic security. It also implies that HMAC executes in finite time.
Secret independence ensures that there are no branches taken as a function of secrets, i.e., 
\attkey and {\sf\it key} in Figure~\ref{fig:sw_att_code}. This mitigates \attkey leakage via timing side-channel attacks.
Memory safety guarantees that implemented code is type safe, meaning that it never reads from, or writes to: invalid 
memory locations, out-of-bounds memory, or unallocated memory. This is particularly important for preventing ROP attacks, 
as long as \textbf{P7} (controlled invocation) is also preserved\footnote{Otherwise, even though the implementation is 
memory-safe and correct as a whole, chunks of a memory-safe code could still be used in ROP attacks.}.

Having all memory allocated on the stack allows us to either: (1) confine \sw execution to a fixed size protected memory region
inaccessible to regular software (including malware) running on \dev; or (2) ensure that \sw stack is erased before the end
of execution. Note that HACL* 
does not provide 
stack erasure, in order to improve performance. Therefore, \textbf{P2} does not follow from HACL* implementation.
However, erasure before \sw terminates must be guaranteed.
Recall that \acron targets low-end MCUs that might run 
applications on bare-metal and thus can not rely on any OS features.

As discussed above, even though HACL* implementation guarantees \textbf{P4} and storage in \rom guarantees \textbf{P5},
these must be combined with \textbf{P6} and \textbf{P7} to provide safe execution. \textbf{P6} and \textbf{P7} -- along with the key 
protection properties (\textbf{P1}, \textbf{P2}, and \textbf{P3}) -- are ensured by \hw, which are described next.

\subsection{Key Access Control (\hw)}
If malware manages to read \attkey from \rom, it can reply to \vrf with a forged result.
\hw access control (AC) sub-module enforces that \attkey can only be accessed by \sw (\textbf{P1}).\\


\subsubsection{\textbf{LTL Specification}}
The invariant for key access control (AC) is defined in LTL Specification~(\ref{eq:AC_rule}). It stipulates that system must transition to the
$Reset$ state whenever code from outside $CR$ tries to read from 
$D_{addr}$ within the key space.
\begin{equation}\label{eq:AC_rule}
\eqsize
\begin{split}
\text{\bf G}: \ \{
\neg (PC \in CR) \land R_{en} \land (D_{addr} \in KR) \rightarrow reset \ \}
\end{split}
\end{equation}
\subsubsection{\textbf{Verified Model}}
Figure~\ref{fig:AC_FSM} shows the FSM implemented by the AC sub-module which is verified to hold for LTL 
Specification~\ref{eq:AC_rule}. This FSM has two states: \textit{Run} and \textit{Reset}.  It outputs $reset=1$ 
when the AC sub-module transitions to state \textit{Reset}. This implies a hard-reset of the MCU.
Once the reset process completes, the system leaves the \textit{Reset} state.

\begin{figure}
\begin{center}
\noindent\resizebox{0.7\columnwidth}{!}{%
	\begin{tikzpicture}[->,>=stealth',auto,node distance=8.0cm,semithick]
		\tikzstyle{every state}=[minimum size=1.5cm]
		\tikzstyle{every node}=[font=\large]

		\node[state] 		(A)					{$Run$};
		\node[state]         (B) [right of=A,align=center]	{$Reset$};

		\path[->,every loop/.style={looseness=8}] 
			(A)
  				edge [loop above] node {$otherwise$} (A)
			(B)  
  				edge [loop above] node {$otherwise$} (B);
  		
\draw[->] (A.345) -- node[rotate=0,below, align=center,auto=right] {\small$\neg(PC\in CR)\,\land R_{en}\,\land\,(D_{addr}\in KR)$} (B.195);
\draw[<-] (A.15) -- node[rotate=0,above] {\small$PC=0$} (B.165);
	\end{tikzpicture}
}
\caption{Verified FSM for Key AC}
\label{fig:AC_FSM}
\end{center}
\end{figure}
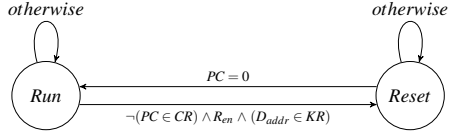

\subsection{Atomicity and Controlled Invocation (\hw)}
In addition to functional correctness, safe execution of attestation code requires immutability (\textbf{P5}), 
atomicity (\textbf{P6}), and controlled invocation (\textbf{P7}). \textbf{P5} is achieved directly by placing \sw in \rom.
Therefore, we only need to formalize invariants for the other two properties: atomicity and controlled execution.

\subsubsection{\textbf{LTL Specification}} 
To guarantee atomic execution and controlled invocation, LTL Specifications~(\ref{eq:at1}), (\ref{eq:at2}) and (\ref{eq:irq}) must hold:
\begin{equation}\label{eq:at1}
\eqsize
\begin{split}
 & \text{\bf G}: \ \{ 
  [\neg reset \land
  (PC \in CR) \land
  \neg (\text{\bf X}(PC) \in CR)] \rightarrow 
  [PC = CR_{max} \lor \text{\bf X}(reset)] \ \}
\end{split}
\end{equation}
\begin{equation}\label{eq:at2}
\eqsize
\begin{split}
 & \text{\bf G}: \ \{ 
 [\neg reset \land
 \neg (PC \in CR) \land
  (\text{\bf X}(PC) \in CR)] \rightarrow
 [\text{\bf X}(PC) = CR_{min} \lor \text{\bf X}(reset)] \ \}
\end{split}
\end{equation}
\begin{equation}\label{eq:irq}
\eqsize
\begin{split}
 & \text{\bf G}: \ \{
  irq ~\land~
  (PC \in CR)~ \rightarrow
 ~reset \ \}
\end{split}
\end{equation}

LTL Specification~(\ref{eq:at1}) enforces that the only way for \sw execution to terminate is through its last instruction: $PC = CR_{max}$.
This is specified by checking current and next $PC$ values using LTL ne\textbf{X}t operator.
In particular, if current $PC$ value is within \sw region, and next $PC$ value is out of \sw region, then either
current $PC$ value is the address of the last instruction in \sw ($CR_{max}$), or $reset$ is triggered in the next cycle.
Also, LTL Specification~(\ref{eq:at2}) enforces that the only way for $PC$ to enter \sw region is through the very first instruction: $CR_{min}$.
Together, these two invariants imply \textbf{P7}: it is impossible to jump into the middle of \sw, or to leave 
\sw before reaching the last instruction. 

\textbf{P6} is satisfied through LTL Specification~(\ref{eq:irq}).
Atomicity could be violated by interrupts. However, LTL Specification (\ref{eq:irq}) prevents an interrupt to happen while \sw is executing.
Therefore, if interrupts are not disabled by software running on \dev before calling \sw, any interrupt that might 
violate \sw atomicity will cause an MCU $reset$.

\subsubsection{\textbf{Verified Model}}
Figure~\ref{fig:atomicity_fsm} presents a verified model for atomicity and controlled invocation enforcement.
The FSM has five states.
Two basic states $notCR$ and $midCR$ represent moments when $PC$ points to an address: (1) outside $CR$, 
and (2) within $CR$, respectively, not including the first and last instructions of \sw.
Another two: $fstCR$ and $lstCR$ represent states when $PC$ points to the first and last instructions of \sw, respectively.
Note that the only possible path from $notCR$ to $midCR$ is through $fstCR$.
Similarly, the only path from $midCR$ to $notCR$ is through $lstCR$.
The FSM transitions to the $Reset$ state whenever:
(1) any sequence of values for $PC$ does not obey the aforementioned conditions; or
(2) $irq$ is logical 1 while exeucuting \sw.

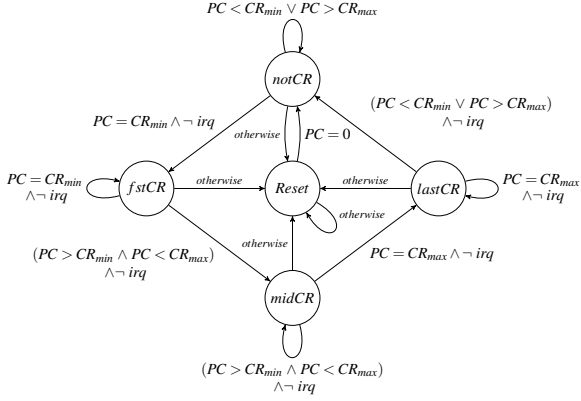
\begin{figure}
\begin{center}
\noindent\resizebox{0.9\columnwidth}{!}{%
	\begin{tikzpicture}[->,>=stealth',auto,node distance=4.0cm,semithick]
		\tikzstyle{every state}=[minimum size=1.5cm]
		\tikzstyle{every node}=[font=\large]

		\node[state] 		(A)					{$Reset$};
		\node[state]         (B) [above of=A,align=center, yshift=-1cm]	{$notCR$};
		\node[state]         (C) [left  of=A,align=center]	{$fstCR$};
		\node[state]         (E) [below of=A, yshift=1cm] {$midCR$};
		\node[state]         (D) [right of=A]	{$lastCR$};

		\path[->,every loop/.style={looseness=8}] (A) edge [bend right=10]  node [right] {$PC=0$} (B)
				edge [out=330,in=300,looseness=8] node[above right, yshift=.2cm] {\small$otherwise$} (A)
			(B)  edge [loop above] node {$PC<CR_{min}\,\lor\,PC>CR_{max}$} (C)
				edge node [above left] {$PC=CR_{min}\land \neg~irq$} (C)
				edge [bend right=10] node [left] {\small$otherwise$} (A)
			(C)  edge [loop left] node [above right, left] {\shortstack{$PC=CR_{min}$\\~$\land \neg~irq$}} (C)
				edge node [below left] {\shortstack{$(PC>CR_{min}\,\land\,PC<CR_{max})$\\~$\land \neg~irq$}} (E)
				edge node {\small$otherwise$} (A)
			(E)  edge [loop below] node {\shortstack{$(PC>CR_{min}\,\land\,PC<CR_{max})$\\~$\land \neg~irq$}} (C)
				edge node [below right] {$PC=CR_{max}\land \neg~irq$} (D)
				edge node [left] {\small$otherwise$} (A)
			(D)  edge [loop right] node [above left, right]  {\shortstack{$PC=CR_{max}$\\~$\land \neg~irq$}} (D)
				edge node [above right] {\shortstack{$(PC<CR_{min}\,\lor\,PC>CR_{max})$\\~$\land \neg~irq$}} (B)
				edge node [above]  {\small$otherwise$} (A);
	\end{tikzpicture}
}
\caption{Verified FSM for atomicity and controlled invocation.}
\label{fig:atomicity_fsm}
\end{center}
\end{figure}

\subsection{Key Confidentiality (\hw)}
To guarantee secrecy of \attkey and thus satisfy \textbf{P2}, \acron must enforce the following:
\begin{compactenum}
\item No leaks after attestation: any registers and memory accessible to applications must be erased
at the end of each attestation instance, i.e., before application execution resumes.
\item No leaks on reset: since a reset can be triggered during attestation execution, any registers and
memory accessible to regular applications must be erased upon reset.
\end{compactenum}
Per Axiom \textbf{A4}, all registers are zeroed out upon reset and at boot time.
Therefore, the only time when register clean-up is necessary is at the end of \sw.
Such clean-up is guaranteed by the Callee-Saves-Register convention: Axiom \textbf{A6}.

Nonetheless, the leakage problem remains because of \ram allocated by \sw.
Thus, we must guarantee that \attkey is not leaked through "dead" memory, which could be accessed by
application (possibly, malware) after \sw terminates. A simple and effective way of addressing this issue 
is by reserving a separate secure stack in \ram that is only accessible (i.e., readable and writable) by 
attestation code. All memory allocations by \sw must be done on this stack, and access control to the stack 
must be enforced by \hw. As discussed in Section~\ref{eval}, the size of this stack is constant -- $2.3$KBytes. 
This corresponds to $\approx 3\%$ of MSP430 16-bit address space.
We discuss \acron variants that do not require a reserved stack and trade-offs between them in Section~\ref{alternatives}.

\subsubsection{\textbf{LTL Specification}}
Recall that $XS$ denote a contiguous secure memory region reserved for exclusive access by \sw. 
LTL Specification for the secure stack sub-module is as follows:
\begin{equation}\label{eq:stackrule1}
\eqsize
\begin{split}
& \text{\bf G}: \ \{ 
\neg (PC \in CR) \land
 (R_{en} \lor W_{en}) \land
 (D_{addr} \in XS) \rightarrow 
 reset \ \}
\end{split}
\end{equation}
We also want to prevent attestation code from writing into application memory. Therefore, it is only allowed to write to the 
designated fixed region for the HMAC result ($MR$).
\begin{equation}\label{eq:stackrule2}
\eqsize
\begin{split}
& \text{\bf G}: \ \{ (PC \in CR) \land
 (W_{en}) \land
 \neg (D_{addr} \in XS) \land 
 \neg (D_{addr} \in MR) \rightarrow reset \ \}
\end{split}
\end{equation}
In summary, invariants (\ref{eq:stackrule1}) and (\ref{eq:stackrule2}) enforce that only attestation code can read from/write 
to the secure reserved stack and that attestation code can only write to regular memory within the space reserved 
for the HMAC result. If any of these conditions is violated, the system resets.

\subsubsection{\textbf{Verified Model}}
Figure~\ref{fig:stack_FSM} shows the FSM verified to comply with invariants~(\ref{eq:stackrule1}) and~(\ref{eq:stackrule2}).
%
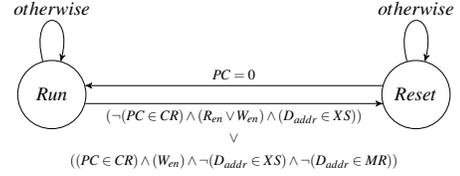
\begin{figure}
\begin{center}
\noindent\resizebox{0.7\columnwidth}{!}{%
	\begin{tikzpicture}[->,>=stealth',auto,node distance=8.0cm,semithick]
		\tikzstyle{every state}=[minimum size=1.5cm]
		\tikzstyle{every node}=[font=\large]

		\node[state] 		(A)					{$Run$};
		\node[state]         (B) [right of=A,align=center]	{$Reset$};

		\path[->,every loop/.style={looseness=8}] 
			(A)
  				edge [loop above] node {$otherwise$} (A)
			(B)  
  				edge [loop above] node {$otherwise$} (B);
  		
\draw[->] (A.345) -- node[rotate=0,below, align=center,auto=right] {\small$(\neg (PC \in CR) \land (R_{en} \lor W_{en}) \land (D_{addr} \in XS))
$\\\small$\lor$\\
\small$ ((PC \in CR) \land (W_{en}) \land \neg (D_{addr} \in XS) \land \neg (D_{addr} \in MR))$} (B.195);
\draw[<-] (A.15) -- node[rotate=0,above] {\small$PC=0$} (B.165);
	\end{tikzpicture}
}
\caption{Verified FSM for Key Confidentiality}
\label{fig:stack_FSM}
\end{center}
\end{figure}

\subsection{DMA Support}\label{sec:dma-sup}
So far, we presented a formalization of \hw sub-modules under the assumption that DMA is either 
not present or disabled on \dev. However, when present, a DMA controller can access 
arbitrary memory regions. Such memory access is performed concurrently in the memory backbone 
and without MCU intervention, while the MCU executes regular instructions.

DMA data transfer is performed using dedicated memory buses, e.g., \dmaen and \dmaaddr.
Hence, regular memory access control (based on monitoring $D_{addr}$) does not apply to memory access 
by DMA controller. Thus, if DMA controller is compromised, it may lead to violation of {\bf P1} and {\bf P2} by 
directly reading \attkey and values in the attestation stack, respectively. In addition, it can assist \dev-resident 
malware to escape detection by either copying it out of the measurement range or deleting it, 
which results in a violation of {\bf P6}.

\subsubsection{\textbf{LTL Specification}}
We introduce three additional LTL Specifications to protect against aforementioned attacks.
First, we enforce that DMA cannot access \attkey.
\begin{equation}\label{eq:DMA_AC_rule}
\eqsize
\begin{split}
& \text{\bf G}: \ \{ \dmaen \land (\dmaaddr \in KR) \rightarrow reset \ \}
\end{split}
\end{equation}
Similarly, LTL Specification for preventing DMA access to the attestation stack is defined as:
\begin{equation}\label{eq:dma-stackrule3}
\eqsize
\begin{split}
& \text{\bf G}: \ \{ \dmaen \land (\dmaaddr \in XS) \rightarrow reset \ \}
\end{split}
\end{equation}
Finally, invariant~(\ref{eq:dma-at}) specifies that DMA must be always disabled while $PC$ is in \sw region.
This prevents DMA controller from helping malware escape during attestation.
\begin{equation}\label{eq:dma-at}
\eqsize
\begin{split}
 & \text{\bf G}: \ \{ (PC \in CR) \land \dmaen \rightarrow reset \ \}
\end{split}
\end{equation}
\subsubsection{\textbf{Verified Model}}
Figure~\ref{fig:DMA_FSM} shows the FSM verified to comply with invariants~(\ref{eq:DMA_AC_rule}) to (\ref{eq:dma-at}).

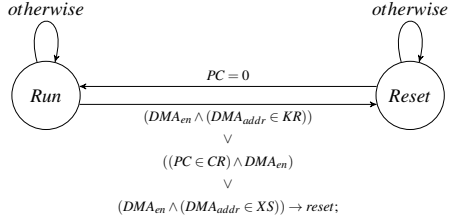
\begin{figure}[!hbtp]
\begin{center}
\noindent\resizebox{0.7\columnwidth}{!}{%
	\begin{tikzpicture}[->,>=stealth',auto,node distance=8.0cm,semithick]
		\tikzstyle{every state}=[minimum size=1.5cm]
		\tikzstyle{every node}=[font=\large]

		\node[state] 		(A)					{$Run$};
		\node[state]         (B) [right of=A,align=center]	{$Reset$};

		\path[->,every loop/.style={looseness=8}] 
			(A)
  				edge [loop above] node {$otherwise$} (A)
			(B)  
  				edge [loop above] node {$otherwise$} (B);
  		
\draw[->] (A.345) -- node[rotate=0,below, align=center,auto=right] {\small$(\dmaen \land (\dmaaddr \in KR))$ \\ \small$\lor$ \\ \small$((PC \in CR) \land \dmaen) $ \\ \small$\lor$ \\ \small$ (\dmaen \land (\dmaaddr \in XS)) \rightarrow reset;$} (B.195);
\draw[<-] (A.15) -- node[rotate=0,above] {\small$PC=0$} (B.165);
	\end{tikzpicture}
}
\caption{Verified FSM for DMA protection}
\label{fig:DMA_FSM}
\end{center}
\end{figure}

\subsection{\hw Composition}
Thus far, we designed and verified individual \hw sub-modules according to the methodology 
in Section~\ref{verif_metho} and illustrated in Figure~\ref{fig:module_verif}. We now follow the workflow of Figure~\ref{fig:composition_verif} 
to combine the sub-modules into a single Verilog module. Since each sub-module individually guarantees a 
subset of properties \textbf{P1--P7}, the composition is simple:  the system must reset whenever any sub-module 
reset is triggered. This is implemented by a logical OR of sub-modules reset signals. The composition is shown 
in Figure~\ref{fig:hwmod_fig}.

To verify that all LTL specifications still hold for the composition, we use 
Verilog2SMV~\cite{irfan2016verilog2smv} to translate \hw to SMV and verify SMV 
for all of these specifications simultaneously.

\begin{figure}
\centering
\includegraphics[width=0.5\columnwidth]{./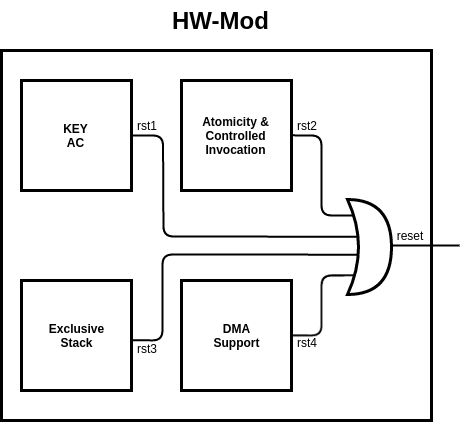}
\caption{\hw composition from sub-modules}\label{fig:hwmod_fig}
\end{figure}

\subsection{Secure Reset (\hw)}\label{sec_reset}
Finally, we define an LTL Specification for secure reset (\textbf{P3}).
According to Axiom \textbf{A4}, all registers (including $PC$) are set to $0$ on reset.
However, the reset routine implemented by the MCU might take several clock cycles.
Ensuring that $reset=1$ until the point when registers are wiped is important in order to guarantee that \attkey is not leaked through registers after a reset.
That is because some part of \attkey might remain in some of the registers if a reset happens during \sw execution.

\subsubsection{\textbf{LTL Specification}}
To guarantee that the reset signal is active for long enough so that the MCU reset finishes and all registers 
are cleaned-up, it must hold that:
\begin{equation}\label{eq:reset_LTL}
\eqsize
\begin{split}
& \text{\bf G}: \ \{ reset \rightarrow [(reset \quad \text{\bf U} \quad PC = 0) \quad \lor \quad \text{\bf G}(reset)] \ \}
\end{split}
\end{equation}
Invariant~(\ref{eq:reset_LTL}) states: when reset signal is triggered, it can only be released after $PC=0$.
Transition from $Reset$ state in all sub-modules presented in this section already takes 
this invariant into account. Thus, \hw composition also verifies LTL Specification~(\ref{eq:reset_LTL}).

\section{Alternative Designs}\label{alternatives}
%
%
%
We now discuss alternative designs for \acron that guarantee
verified properties without requiring a separate secure stack region for \sw operations.
Recall that \hw enforces that only \sw can access this stack. Since memory usage in HACL* \hmac is 
deterministic, the size of the separate stack can be pre-determined  -- $2,332$bytes. Even though resulting
in overall (HW and SW) design simplicity, dedicating $3\%$ of addressable memory to secure \RA might not be 
desirable. Therefore, we consider several alternatives.
In Section~\ref{eval} the costs involved with these alternatives are quantified and compared to the standard design of \acron.

\subsection{Erasure on \sw}%
\label{erasure_sw}
The most intuitive alternative to a reserved secure stack (which prevents accidental key leakage by \sw)
is to encode corresponding properties into the HACL* implementation and proof.
Specifically, it would require extending the HACL* implementation to zero out all allocated memory 
before every function return. In addition, to retain verification of \textbf{P2}  (in Section \ref{high_prop}) and 
ensure no leakage, HACL*-verified properties must be extended to incorporate memory erasure. 
This is not yet supported in HACL* and doing so would incur a slight 
performance overhead. However, the trade-off between performance and \ram savings might be worthwhile.

At the same time, we note that, even with verified erasure as a part of \sw, \textbf{P2} is still not guaranteed if 
the MCU does not guarantee erasure of the entire \ram upon boot. This is necessary in order to consider 
the case when \dev re-boots in the middle of \sw execution. Without a reserved stack, \attkey might persist in \ram.  Since the memory range for \sw execution is not fixed, hardware support is
required to bootstrap secure \ram erasure before starting any software execution.  In fact, such support is
necessary for all approaches without a separate secure stack.

\subsection{Compiler-Based Clean-Up}
\label{comp-clean}
While stack erasure in HACL* would integrate nicely with the overall proof of \sw, the assurance would be at 
the language abstraction level, and not necessarily at the machine level. The latter would require additional
assumptions about the compilation tool chain. We could also consider performing stack erasure 
directly in the compiler.  In fact, a recent proposal to do exactly that was made in {\sf zerostack}~\cite{SiChAn2018}, 
an extension to Clang/LLVM.   In case of \acron, this feature could be used on unmodified HACL* (at compilation time), 
to add instructions to erase the stack before the return of each function enabling~\textbf{P2}, assuming the existence 
of a verified \ram erasure routine upon boot.  We emphasize that this approach may increase the compiler's trusted code base. Ideally,
it should be implemented and formally verified as part of a verified compiler suite, such as CompCert~\cite{compcert}.

\begin{savenotes}
\begin{table*}[!hbtp]
\centering
\caption{Evaluation of cost, overhead, and performance of \RA}
\label{tab:hw-cost}

\resizebox{0.8\linewidth}{!}{%
\begin{tabular}{|l|c|ccc|c|c|cc|cc|}
\hline
\multirow{2}{*}{Method}                                           & \multirow{2}{*}{\begin{tabular}[c]{@{}c@{}}RAM Erasure\\ Required Upon Boot?\end{tabular}} & \multicolumn{4}{c|}{FPGA Hardware}               & \multirow{2}{*}{\begin{tabular}[c]{@{}c@{}}Verilog \\ LoC\end{tabular}} & \multicolumn{2}{c|}{Memory (byte)} & \multicolumn{2}{c|}{Time to attest 4KB} \\
                                                                  &                                                                                   & LUT       & Reg      & \multicolumn{2}{c|}{Cell} &                    & ROM           & Sec. RAM         & CPU cycles                & ms (at 8MHz)                   \\ \hline\hline
Core (Baseline)            						& N/A     & 1842      & 684      & \multicolumn{2}{c|}{3044} & 4034       & 0             & 0         & N/A        & N/A      \\
Secure Stack (Section \ref{verif})            	& No      & 1964      & 721      & \multicolumn{2}{c|}{3237} & 4621       & 4500          & 2332      & 3601216    & 450.15   \\ 
Erasure on \sw (Section \ref{erasure_sw})     	& Yes     & 1954      & 717      & \multicolumn{2}{c|}{3220} & 4516       & 4522          & 0         & 3613283    & 451.66   \\
Compiler-based Clean-up (Section \ref{comp-clean})~\footnote{As mentioned in Section~\ref{comp-clean}, 
there is no formally verified msp430 compiler capable of performing stack erasure. 
Thus, we estimate overhead of this approach by manually inserting code required for erasing the stack in \sw.} 
												& Yes     & 1954      & 717      & \multicolumn{2}{c|}{3220} & 4516       & 4522          & 0         & 3613283    & 451.66   \\
Double-\hmac Call (Section \ref{ssec:dmac})    & Yes     & 1954      & 717      & \multicolumn{2}{c|}{3220} & 4516       & 4570          & 0         & 7201605    & 900.20               \\
\hline
\end{tabular}
}
\end{table*}
\end{savenotes}

\subsection{Double-\hmac Call}
\label{ssec:dmac}
Finally, complete stack erasure could also be achieved directly using currently verified HACL* properties,
without any further modifications.  This approach involves invoking HACL* \hmac function a second time, 
after the computation of the actual \hmac. The second "dummy" call would use the same input data, however, 
instead of using \attkey, an independent constant, such as $\{0\}^{512}$, would be used as the \hmac key.  

Recall that HACL* is verified to only allocate memory on the stack in a deterministic manner. Also,
due to HACL*'s verified properties that mitigate side-channels, software flow does not change based on
the secret key. Therefore, this deterministic allocation implies that, for inputs of the 
same size, any variable allocated by the first "real"  \hmac call  (tainted by \attkey), would be overwritten 
by the corresponding variable in the second "dummy" call. Note that the same guarantee discussed in 
Section~\ref{erasure_sw} is provided here and secure \ram erasure at boot would still be needed for 
the same reasons. Admittedly, this double-HMAC approach would consume twice as many CPU cycles. 
Still, it might be a worthwhile trade-off, especially, if there is memory shortage and lack of previously 
discussed HACL* or compiler extension.

\section{Evaluation}\label{eval}
We now discuss implementation details and evaluate \acron's overhead and performance.
Section~\ref{vperf} reports on verification complexity.
Section~\ref{overhead} discusses performance in terms of time and space complexity as well as its hardware overhead.
We also provide a comparison between \acron and other \RA architectures targeting low-end devices, namely SANCUS~\cite{Sancus17} and SMART~\cite{smart}, in Section~\ref{sec:comparison}.

\subsection{Implementation}\label{impl-details}
As mentioned earlier, we use OpenMSP430~\cite{openmsp430} as an open core implementation of the MSP430 architecture. 
OpenMSP430 is written in the Verilog hardware description language (HDL) and can execute software generated by any 
MSP430 toolchain with near cycle accuracy.
We modified the standard OpenMSP430 to implement the hardware architecture presented in Section~\ref{sys-arch}, as shown 
in Figure~\ref{fig:arch}. This includes adding \rom to store \attkey and \sw, adding \hw, and adapting the memory backbone accordingly.
We use Xilinx Vivado~\cite{vivado} -- a popular logic synthesis tool -- to synthesize an RTL description of \hw into 
hardware in FPGA.
FPGA synthesized hardware consists of a number of logic cells. Each consists of Look-Up Tables (LUTs) and registers; 
LUTs are used to implement combinatorial boolean logic while registers are used for sequential logic elements, i.e., 
FSM states and data storage. We compiled \sw using the native msp430-gcc~\cite{msp430-gcc} and used Linker scripts
to generate software images compatible with the memory layout of Figure~\ref{fig:arch}.
Finally, we evaluated \acron on the FPGA platform targeting Artix-7~\cite{artix7} class of devices.
\begin{table}[!hbtp]
\centering
\caption{Verification results running on a desktop @ 3.40 GHz.}
\label{tab:nusmv-usage}
\resizebox{0.8\linewidth}{!}{%
\begin{tabular}{l|c|c|c|c}
      HW Submod. & LTL Spec. & Mem. (MB) & Time (s) & Verified \\ \hhline{=|=|=|=|=}
      Key AC & \ref{eq:AC_rule},\ref{eq:reset_LTL} & 7.5 & .02 & \cmark \\
      Atomicity & \ref{eq:at1},\ref{eq:at2},\ref{eq:irq},\ref{eq:reset_LTL} & 8.5 & .05 & \cmark \\
      Exclusive Stack & \ref{eq:stackrule1},\ref{eq:stackrule2},\ref{eq:reset_LTL} & 8.1 & .03 & \cmark \\
      DMA Support & \ref{eq:DMA_AC_rule}-\ref{eq:reset_LTL} & 8.2 & .04 & \cmark \\ \hline
      \hw & \ref{eq:AC_rule}-\ref{eq:reset_LTL} & 13.6       & .28  & \cmark   \\
\end{tabular}
}
\end{table}
\subsection{Verification Results}\label{vperf}

As discussed in Section~\ref{high_prop}, \acron's verification consists of properties \textbf{P1}--\textbf{P7}. 
\textbf{P5} is achieved directly by executing \sw from \rom. Meanwhile, HACL* \hmac verification implies \textbf{P4}.
All other properties are automatically verified using NuSMV model checker. 
Table~\ref{tab:nusmv-usage} shows the verification results of \acron's \hw composition as well as results for individual sub-modules.
It shows that \acron successfully achieves all the required security properties.
These results also demonstrate feasibility of our verification approach, since the verification process -- running on a commodity desktop 
computer -- consumes only small amount of memory and time:  $<14$MB and $0.3$sec, respectively, for all properties.

\subsection{Performance and Hardware Cost}\label{overhead}
We now report on \acron's performance considering the standard design (described in Section~\ref{verif}) and alternatives 
discussed in Section~\ref{alternatives}.
We evaluate the hardware footprint,  memory (\rom and secure \ram), and run-time. 
Table~\ref{tab:hw-cost} summarizes the results.\\
\noindent\textbf{Hardware Footprint.}
The secure stack approach adds around 434 lines of code in Verilog HDL. 
This corresponds to around 20\% of the code in the original OpenMSP430 core.
In terms of synthesized hardware, it requires 122 (6.6\%) and 19 (5.4\%) additional LUTs and registers respectively. 
Overall, \acron contains 193 logic cells more than the unmodified OpenMSP430 core, corresponding to a 6.3\% increase.\\
%
%
\noindent\textbf{Memory.}
\acron requires $\sim$4.5KB of \rom; 
most of which (96\%) is for storing HACL* \hmac -SHA256 code.
The secure stack approach has the smallest \rom size, as it does not need to perform a memory clean-up in software.
However, this advantage is attained at the price of requiring 2.3KBytes of reserved \ram.
This overhead corresponds to 3.5\% of MSP430 16-bit address space.\\
\noindent\textbf{Attestation Run-time.}
Attestation run-time is dominated by the time it takes to compute the \hmac of \dev's memory.
The secure stack, erasure on \sw and compiler-based clean-up approaches take roughly .45$s$ to attest 
4$KB$ of \ram on an MSP430 device with a clock frequency at 8MHz.
Whereas, the double MAC approach requires invoking the \hmac function twice, leading its run-time to be roughly two times slower.\\
\noindent\textbf{Discussion.}
We consider \acron's overhead to be affordable.
The additional hardware, including registers, logic gates and exclusive memory, resulted in only a 3-6\% increase.
The number of cycles required by \sw exhibits a linear increase with the size of attested memory.
As MSP430 typically runs at 8-25MHz, attestation of the entire \ram on a typical MSP430 can be computed in less than a second.
\acron's \RA is relatively cheap to the \dev.
As a point of comparison we can consider a common cryptographic primitive such as the Curve25519 Elliptic-Curve Diffie-Hellman (ECDH) key exchange.
A single execution of an optimized version of such protocol on MSP430 has been reported to take $\approx 9$ million cycles~\cite{hinterwalder2014full}.
As  Table~\ref{tab:hw-cost} shows, attestation of $4$KBytes (typical size of \ram in some MSP430 models) can be computed three times faster.

\subsection{Comparison with Other Low-End \RA Architectures}\label{sec:comparison}

We here compare \acron's overhead with two widely known \RA architectures targeting low-end embedded systems: SMART~\cite{smart} and SANCUS~\cite{Sancus17}.
We emphasize, however, that both SMART and SANCUS were designed in an ad hoc manner.
Thus, they can not be formally verified and do not provide any guarantees offered by \acron's verified architecture.
Nevertheless, it is considered important to contrast \acron's cost with such architectures to demonstrate its affordability.

Table~\ref{tab:comparison} presents a comparison between features offered and required by aforementioned architectures.
SANCUS is, to the best of our knowledge, the cheapest pure HW-based architecture, while SMART is a minimal HW/SW \RA co-design.
Since SANCUS's \RA routine is implemented entirely in HW, it does not require \rom to store the SW implementation of the integrity ensuring function.
\acron implements a MAC with digest sizes of 256-bits.
\smart and SANCUS, on the other hand, use SHA1-based MAC and SPONGNET-128/128/8~\cite{bogdanov2013spongent}, respectively.
Such MACs do not offer strong collision resistance due to the small digest sizes (and known collisions).
Of the three architectures, \acron is the only one secure in the presence of DMA
and the only one to be rigorously specified and formally verified.

\begin{table}[!t]
\centering
\caption{Qualitative comparison between \RA architectures targeting low-end devices}
\label{tab:comparison}
\resizebox{\linewidth}{!}{%
\begin{tabular}{|l|c|c|c|}
                   & \acron & SMART & SANCUS \\ \hhline{=|=|=|=|} 
      Design Type & Hybrid (HW/SW) & Hybrid (HW/SW) & Pure HW \\
      \RA function & HMAC-SHA256 & HMAC-SHA1 & SPONGNET-128/128/8 \\
      \rom for \RA code & Yes & Yes & No  \\
      DMA Support & Yes & No & No \\
      Formally Verified & Yes & No & No  \\
\end{tabular}
}
\end{table}

Figure~\ref{fig:comparison} presents a quantitative comparison between the \RA architectures.
It considers additional overhead in relation to the latest version of the unmodified OpenMSP430 (Available at~\cite{openmsp430}).
Compared to \acron, SANCUS requires $12\times$ more Look-Up Tables, $22\times$ more registers, and its (unverified) TCB
is 2.5 times larger in lines of Verilog code. This comparison demonstrates the cost of relying on a HW-only approach even when designed for minimality.
\smart's overhead is slightly smaller than that of \vrased due to lack of DMA support.
In terms of attestation execution time, \smart is the slowest, requiring 9.2M clock cycles to attest 4KB of memory.
SANCUS achieves the fastest attestation time (1.3M cycles) due to the HW implementation of SPONGNET-128/128/8.
\acron sits in between the two with a total attestation time of 3.6M cycles.

\begin{figure}[t]
	\centering
	\subfigure[Additional HW overhead (\%) in Number of Look-Up Tables]
	{\includegraphics[width=0.45\columnwidth]{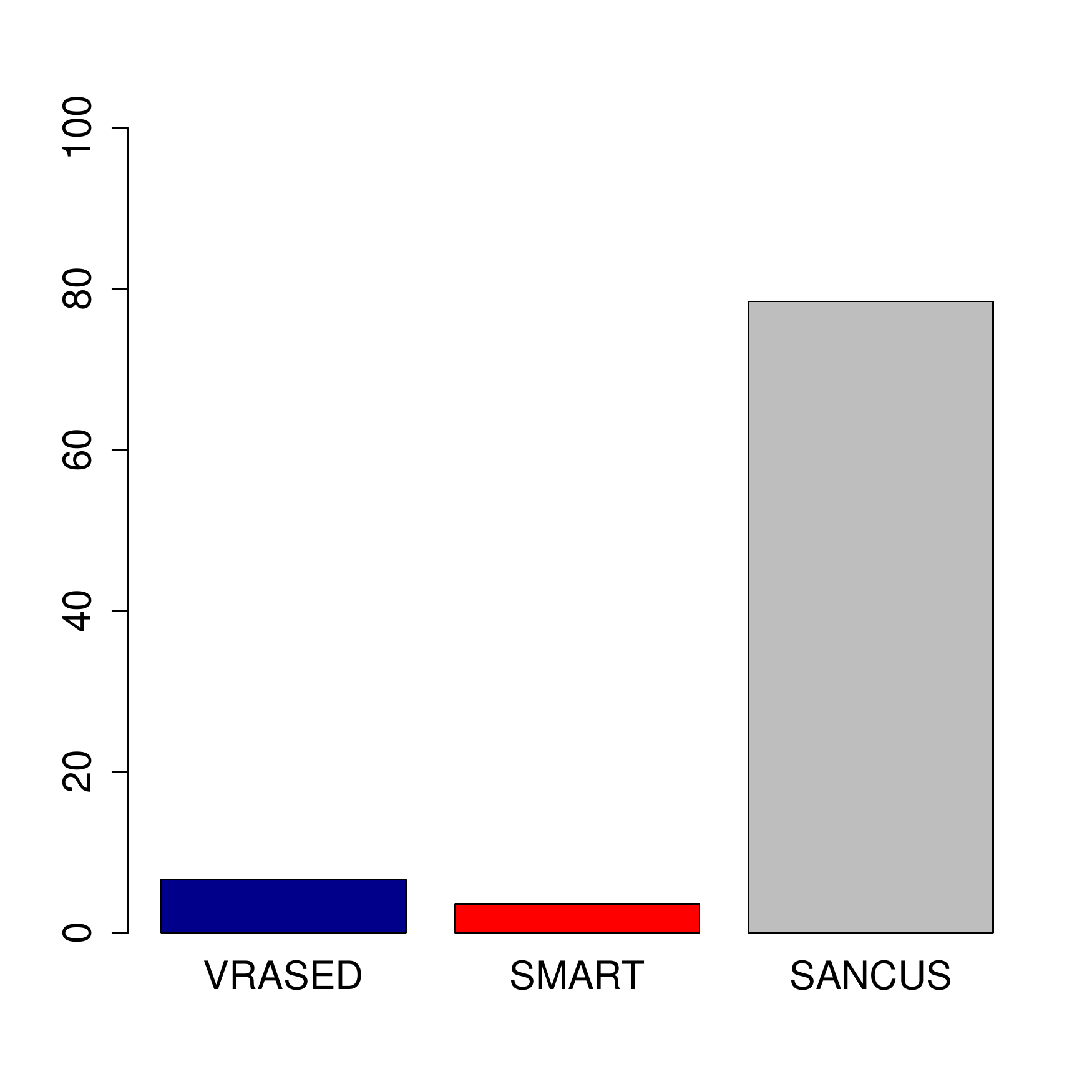}}
	\subfigure[Additional HW overhead (\%) in Number of Registers]
	{\includegraphics[width=0.45\columnwidth]{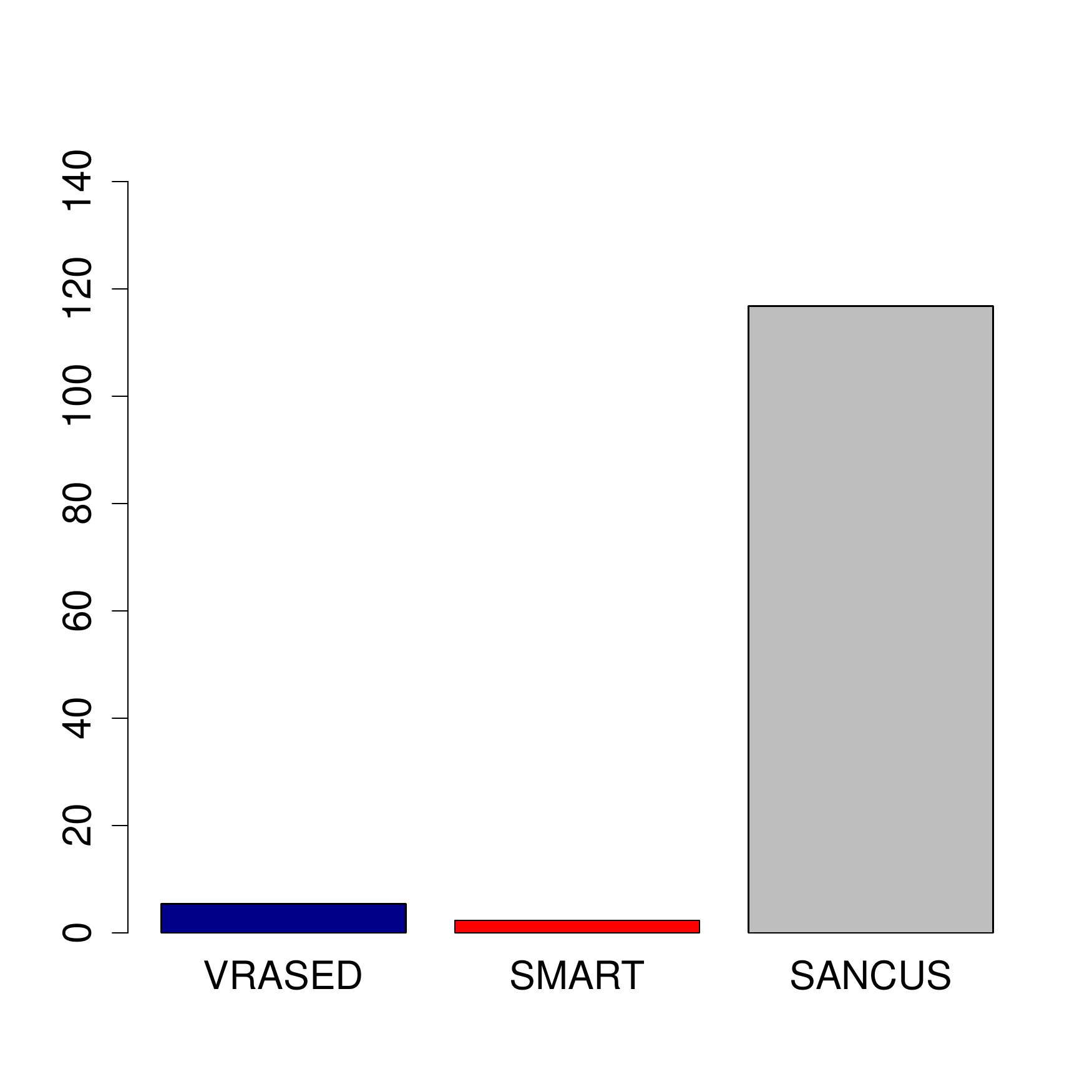}}
	\subfigure[Additional Verilog Lines of Code]
	{\includegraphics[width=0.45\columnwidth]{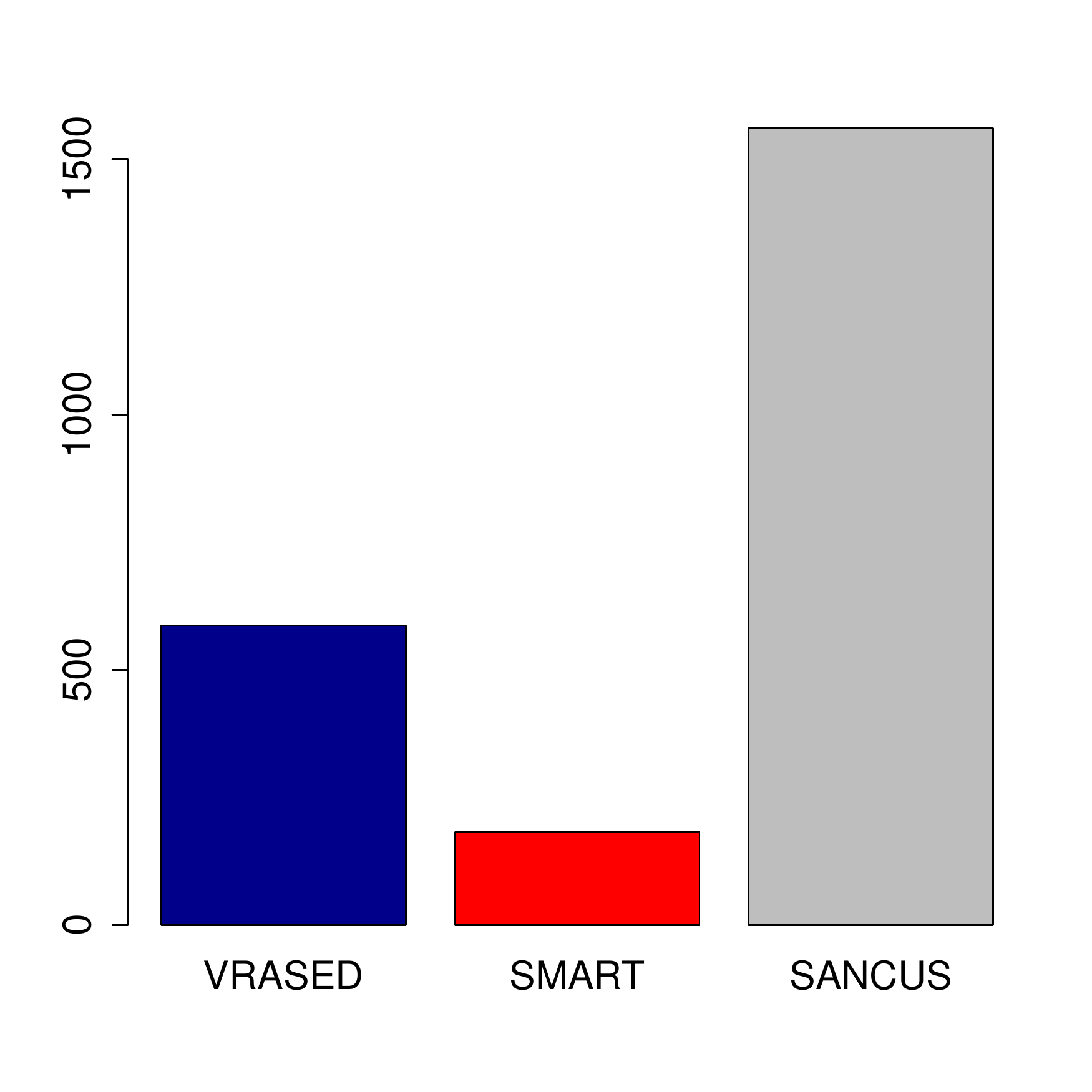}}
	\subfigure[Time to attest 4KB (in millions of CPU cycles)]
	{\includegraphics[width=0.45\columnwidth]{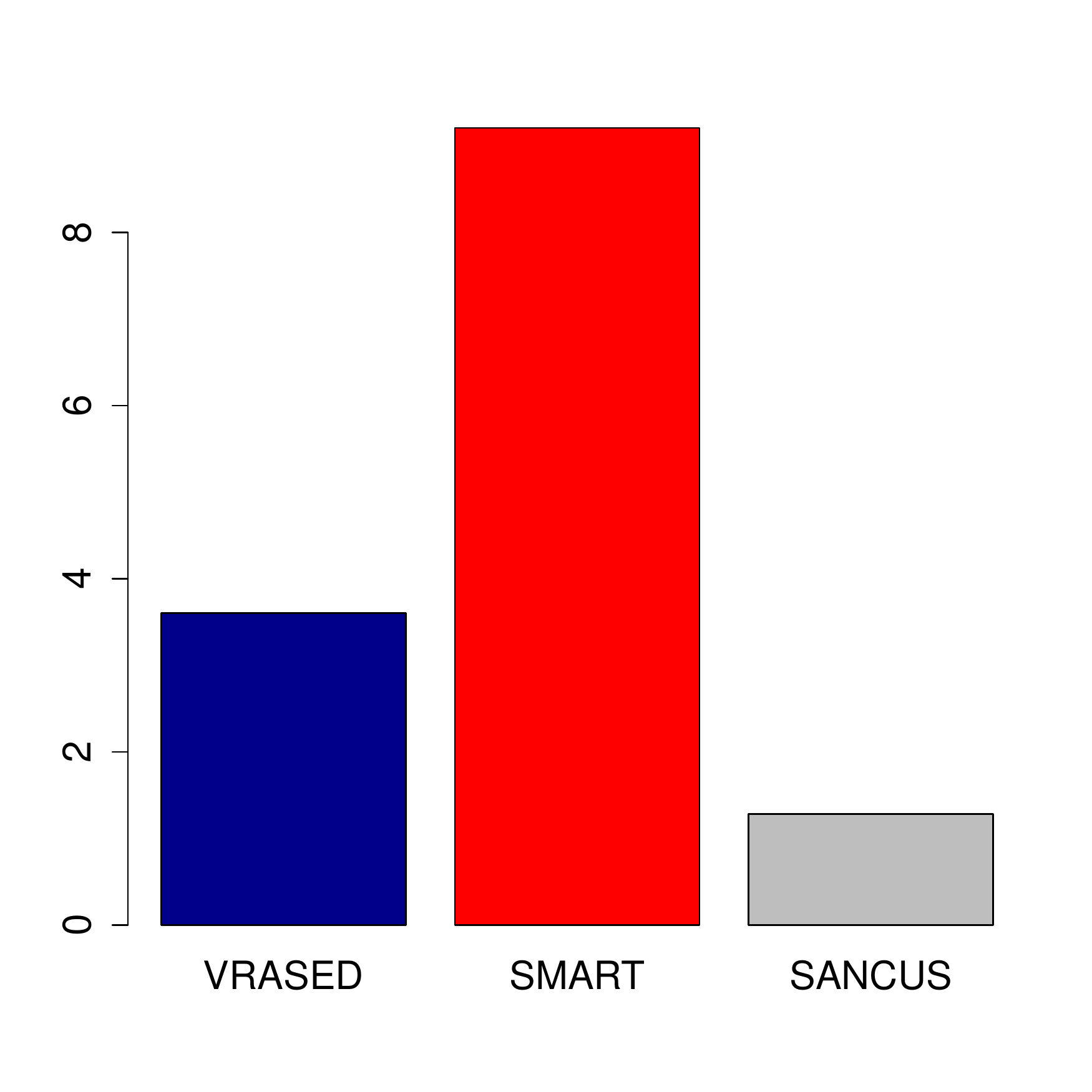}}
	\caption{Comparison between \RA architectures targeting low-end devices}\label{fig:comparison}
\end{figure}
\section{Related Work}\label{RW}
We are unaware of any previous work that yielded a formally verified \RA design (\RA architectures are overviewed in Section~\ref{sec:ra_bg}).
To the best of our knowledge, \acron is the first verification of a security service implemented as HW/SW co-design.
Nevertheless, formal verification has been widely used as the \textit{de facto} means to guarantee that a system is 
free of implementation errors and bugs. In recent years, several efforts focused on verifying security-critical systems.

In terms of cryptographic primitives, Hawblitzel et al. \cite{hawblitzel2014ironclad} verified new implementations of 
SHA, HMAC, and RSA. Beringer et al.\cite{beringer2015verified} verified the Open-SSL 
SHA-256 implementation. Bond et al.~\cite{bond2017vale} verified an assembly implementation of SHA-256, Poly1305, 
AES and ECDSA. More recently, Zinzindohou{\'e}, et al.~\cite{hacl} developed HACL*, a verified cryptographic 
library containing the entire cryptographic API of NaCl~\cite{bernstein2012security}. As discussed earlier, HACL*'s verified \hmac
forms the core of \acron's software component.

Larger security-critical systems have also been successfully verified. For example, Bhargavan~\cite{bhargavan2013implementing} 
implemented the TLS protocol with verified cryptographic security. CompCert\cite{compcert} is a C compiler that is formally verified to 
preserve C code semantics in generated assembly code. Klein et al.~\cite{sel4} designed and proved functional correctness of \sel\  -- 
the first verified general-purpose microkernel. More recently, Tuncay et al. verified a design for Android OS App permissions 
model~\cite{tuncay2018resolving}.

The importance of verifying \RA has been recently acknowledged by Lugou et al.~\cite{lugoutoward}, which discussed methodologies 
for specifically verifying HW/SW \RA co-designs. A follow-on result proposed the SMASH-UP tool~\cite{lugou2017smashup}. 
By modeling a hardware abstraction, SMASH-UP allows automatic conversion of assembly instructions to the effects on hardware representation.
Similarly, Cabodi et al.~\cite{cabodi2015formal,cabodi2016secure} discussed the first steps towards formalizing hybrid \RA properties.
However, none of these results yielded a fully verified (and publicly available) \RA architecture, such as \acron.

\section{Conclusion}\label{conclusion}
This paper presents \acron ~-- the first formally verified \RA method that uses a verified cryptographic software 
implementation and combines it with a verified hardware design to guarantee correct implementation of \RA security properties.
\acron is also the first verified security service implemented as a HW/SW co-design.
\acron was designed with simplicity and minimality in mind. It results in efficient computation and low
hardware cost, realistic even for low-end embedded systems.
\acron's practicality is demonstrated via publicly available implementation using the low-end MSP430 platform.
The design and verification methodology presented in this paper can be extended to other MCU architectures.
We believe that this work represents an important and timely advance in embedded systems security, especially, 
with the rise of heterogeneous ecosystems of (inter-)connected IoT devices. 

The most natural direction for future work is to adapt \acron to other MCU architectures.
Such an effort could follow the same verification methodology presented in this paper.
It would involve: (1) mapping MCUs specifications to a set of axioms (as we did for MSP430 in Section~\ref{approach}), and 
(2) adapting the proofs by modifying the LTL Specifications and hardware design (as in in Section~\ref{verif}) accordingly.
A second direction is to extend \acron's capabilities to include and verify other
trusted computing services such as secure updates, secure deletion, and remote code execution.
It would also be interesting to verify and implement other \RA
designs with different requirements and trade-offs, such as software-
and hardware-based techniques.
In the same vein, one promising
direction would be to verify \hydra~\RA architecture~\cite{hydra},
which builds on top of the formally verified \sel~\cite{sel4}
microkernel.
Finally, the optimization of \vrased's \hmac, with respect to computation and memory allocation, while retaining its verified properties, is an interesting open problem.

\bibliographystyle{abbrv}

{\footnotesize
\linespread{0}
\bibliography{references}
}

\vspace{1em}
\begin{large}
 \noindent\textbf{APPENDIX}
\end{large}

\appendix

\section{\RA Soundness and Security Proofs}\label{proofs}

\subsection{Proof Strategy}
We present the proofs for \RA soundness (Definition~\ref{ra_soundness}) and \RA security (Definition \ref{sec_def}).
Soundness is proved entirely via LTL equivalences.
In the proof of security we first show, via LTL equivalences, that \acron guarantees that adversary \adv\ can never learn \attkey with more than negligible probability.
We then prove security by showing a reduction from \hmac existential unforgeability to security.
In other words, we show that existence of \adv\ that breaks \vrased implies existence of \hmac-\adv\ able to break conjectured existential unforgeability of \hmac.
The full machine-checked proofs for the LTL equivalences (using Spot 2.0~\cite{spot} proof assistant) discussed in the remainder of this section are available in~\cite{public-code}.

\subsection{Machine Model}
To prove that \acron's design satisfies end-to-end definitions of soundness and security for \RA, we start by formally defining (in LTL) memory and execution models corresponding to the architecture introduced in Section~\ref{approach}.
\begin{figure}[!ht]
\begin{mdframed}
\scriptsize
\begin{definition}[Memory model]\label{mem_model}~\\ 
\begin{enumerate}
 \item \attkey is stored in \rom $\leftrightarrow$ $G:\{KR = \attkey\}$ 
 \item \swtiny is stored in \rom $\leftrightarrow$ $G:\{CR = \swtiny\}$ 
 \item $MR$, $CR$, $AR$, $KR$, and $XS$ are non-overlapping memory regions
 \end{enumerate}
\end{definition}
\end{mdframed}
\end{figure}

The memory model in Definition~\ref{mem_model} captures that $KR$ and $CR$ are \rom regions, and are thus immutable. Hence, the values stored in those regions always correspond to \attkey and \sw code, respectively.
Finally, the memory model states that $MR$, $CR$, $AR$, $KR$, and $XS$ are disjoint regions in the memory layout, corresponding to the architecture in Figure~\ref{fig:arch}.
\begin{figure}[!ht]
\begin{mdframed}
\scriptsize
\begin{definition}[Execution model]\label{exec_model}~\\ 
\begin{enumerate}
 \item Modify\_Mem($i$) $\rightarrow$ $(W_{en} \land D_{addr} = i) \lor (DMA_{en} \land DMA_{addr} = i)$
 \item Read\_Mem($i$) $\rightarrow~ (R_{en} \land D_{addr} = i) \lor (DMA_{en} \land DMA_{addr} = i)$
 \item Interrupt $\rightarrow$ $irq$
\end{enumerate}
\end{definition}
\end{mdframed}
\end{figure}

Our execution model, in Definition~\ref{exec_model}, translates MSP430 behavior by capturing the effects on the processor signals when reading and writing from/to memory.
We do not model the effects of instructions that only modify register values (e.g., ALU operations, such as $add$ and $mul$) because they are not necessary in our proofs.

The execution model defines that a given memory address can be modified in two cases: by a CPU instruction or by DMA.
In the first case, the $W_{en}$ signal must be on and $D_{addr}$ must contain the memory address being accessed.
In the second case, $DMA_{en}$ signal must be on and $DMA_{addr}$ must contain the address being modified by DMA.
The requirements for reading from a given address are similar, except that instead of $W_{en}$, $R_{en}$ must be on.
Finally, the execution model also captures the fact that an interrupt implies setting the $irq$ signal to 1. 

\subsection{\RA Soundness Proof}

The proof follows from \sw functional correctness (expressed by Definition~\ref{sw_fc}) and LTL specifications~\ref{eq:at1}, \ref{eq:irq}, \ref{eq:stackrule2}, and \ref{eq:dma-at}

\begin{theorem}\label{soundeness_th}
VRASED is sound according to Definition~\ref{ra_soundness}.
\end{theorem}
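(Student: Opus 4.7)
The plan is to derive the premise of \sw functional correctness (Definition~\ref{sw_fc}) from the premise of Definition~\ref{ra_soundness} and then apply it directly. The two premises differ in that \ref{sw_fc} additionally demands, throughout the window $(\neg reset)\,\textbf{U}\,(PC = CR_{max})$, that $\neg irq$, $CR = {\swtiny}$, $KR = \attkey$ and $AR = M$ all hold. The task reduces to showing that each of these four conjuncts is implied by the hypotheses of Definition~\ref{ra_soundness} together with the already-verified LTL sub-properties and the machine model.

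First I would discharge the two easy conjuncts: $CR = \swtiny$ and $KR = \attkey$ hold \emph{globally}, directly from the memory model (Definition~\ref{mem_model}, clauses 1 and 2), since both regions live in \rom and are immutable. Next, I would handle $\neg irq$. The hypothesis gives $(\neg reset)\,\textbf{U}\,(PC = CR_{max})$, and I need to argue $PC \in CR$ throughout that interval. Using LTL Specification~\ref{eq:at1}, once $PC = CR_{min}$ the only way for $PC$ to leave $CR$ without resetting is through $PC = CR_{max}$, so by induction on the execution window $PC \in CR$ holds until $CR_{max}$ is reached. Combined with LTL Specification~\ref{eq:irq} (which forces $reset$ whenever $irq \land PC \in CR$) and the assumption $\neg reset$ on the interval, we get $\neg irq$ on the interval by contraposition.

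The main obstacle is establishing $AR = M$ throughout the execution window; this is where temporal consistency of the attested memory really lives. Using the execution model (Definition~\ref{exec_model}), $AR$ can only change via either a CPU store ($W_{en} \land D_{addr} \in AR$) or a DMA write ($DMA_{en} \land DMA_{addr} \in AR$). For the CPU case, since $PC \in CR$ throughout the window (as shown above), LTL Specification~\ref{eq:stackrule2} forces any such write to target $XS \cup MR$; but the memory model states that $AR$ is disjoint from both $XS$ and $MR$, so no CPU write to $AR$ occurs without triggering $reset$, contradicting the hypothesis. For the DMA case, LTL Specification~\ref{eq:dma-at} forces $\neg DMA_{en}$ whenever $PC \in CR$ (again, on pain of $reset$). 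Hence neither avenue of modification is available on the interval, and $AR$ retains its initial value $M$ until $PC = CR_{max}$.

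Having established all four additional conjuncts on the interval $(\neg reset)\,\textbf{U}\,(PC = CR_{max})$, the premise of Definition~\ref{sw_fc} is satisfied. Applying \sw functional correctness then yields $\textbf{F}:[PC = CR_{max} \land MR = HMAC(KDF(\attkey,\chal), M)]$, which is exactly the conclusion of Definition~\ref{ra_soundness}. The whole argument is a chain of LTL implications, so it is amenable to a machine-checked discharge via Spot as mentioned in the paper; the only subtlety I anticipate is cleanly formalizing the induction that lifts the local invariant from \ref{eq:at1} (a one-step property about $PC$ and $\textbf{X}(PC)$) into the global statement ``$PC \in CR$ on the whole $\textbf{U}$-window,'' which in LTL follows from repeated application of \ref{eq:at1} combined with the hypothesis $PC = CR_{min}$ at the start and the non-reset assumption.
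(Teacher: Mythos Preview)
Your proposal is correct and follows essentially the same route as the paper: reduce to the premise of Definition~\ref{sw_fc} by discharging $CR=\swtiny$ and $KR=\attkey$ via the memory model, use LTL Specifications~\ref{eq:at1} and~\ref{eq:irq} to obtain $PC\in CR$ and $\neg irq$ on the window, and use LTL Specifications~\ref{eq:stackrule2} and~\ref{eq:dma-at} (plus disjointness of $AR$ from $XS\cup MR$) for temporal consistency of $AR$. The paper packages the $AR=M$ step as a separate lemma (Lemma~\ref{lemma1l}) but the ingredients and the overall implication $Definition~\ref{sw_fc}\land LTL_{\ref{eq:at1}}\land LTL_{\ref{eq:irq}}\land LTL_{\ref{eq:stackrule2}}\land LTL_{\ref{eq:dma-at}}\rightarrow$ soundness are identical to yours.
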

\begin{proof}
\scriptsize
\begin{align*}
	\scriptsize
	\label{eq:lemma1}
	Definition \ \ref{sw_fc} \ \land \ LTL_{\ref{eq:at1}} \land LTL_{\ref{eq:irq}} \land LTL_{\ref{eq:stackrule2}} \land LTL_{\ref{eq:dma-at}} \rightarrow Theorem~\ref{ra_soundness}
\end{align*}
\end{proof}
The formal computer proof for Theorem~\ref{ra_soundness} can be found in~\cite{public-code}.
Due to space limitations, we only provide some intuition, by splitting the proof into two parts.
First, \sw functional correctness (Definition~\ref{sw_fc}) would imply Theorem~\ref{ra_soundness} if
$AR$, $CR$, $KR$ never change and an interrupt does not happen during \sw computation. However, memory model Definitions~\ref{mem_model}.1 and~\ref{mem_model}.2 already
guarantee that $CR$ and $KR$ never change. Also, LTL~\ref{eq:irq} states that an interrupt cannot happen during \sw computation, otherwise the device resets.
Therefore, it remains for us to show that $AR$ does not change during \sw computation.
This is stated in Lemma~\ref{lemma1l}.
\begin{figure}[!ht]
\begin{mdframed}
\scriptsize
\begin{lemma}{Temporal Consistency -- Attested memory does not change during \swtiny computation}\label{lemma1l}
\begin{align*}
\begin{split}
	& \textbf{G}: \ \{ \\
    	& PC = CR_{min} \land AR = M \land \neg reset \ \textbf{U} \ (PC=CR_{max}) \rightarrow \\
	& (AR = M) \ \textbf{U} \ (PC=CR_{max}) \ \}
\end{split}
\end{align*}
\end{lemma}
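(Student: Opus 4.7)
The plan is to prove Lemma~\ref{lemma1l} by contradiction, exploiting the execution model (Definition~\ref{exec_model}) to exhaustively enumerate how $AR$ could possibly change, and then showing that every such change is ruled out by one of the already-verified LTL invariants together with the memory-layout disjointness in Definition~\ref{mem_model}. Concretely, I would assume the antecedent $PC = CR_{min} \land AR = M \land (\neg reset\ \textbf{U}\ PC = CR_{max})$ and suppose, for contradiction, that there exists some time step $t$ strictly before $PC = CR_{max}$ at which $AR$ differs from $M$. Let $t^{\star}$ be the first such step; by the execution model, the transition between $t^{\star}-1$ and $t^{\star}$ must have performed $\mathrm{Modify\_Mem}(i)$ for some $i \in [AR_{min}, AR_{max}]$, so either (a) $W_{en} \land D_{addr} = i$ or (b) $DMA_{en} \land DMA_{addr} = i$ holds at $t^{\star}-1$.

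In case (a), we are within the ``until'' window, so $\neg reset$ holds and $PC$ has remained inside $CR$ continuously since $CR_{min}$ by LTL~\ref{eq:at1} and LTL~\ref{eq:at2} (the only entry/exit points of $CR$ are its first/last instruction). Thus at step $t^{\star}-1$ we have $PC \in CR \land W_{en} \land D_{addr} \in AR$. Since Definition~\ref{mem_model}(3) makes $AR$ disjoint from $XS$ and $MR$, we have $D_{addr} \notin XS$ and $D_{addr} \notin MR$, so LTL~\ref{eq:stackrule2} forces $reset$ in the very next cycle, contradicting $\neg reset\ \textbf{U}\ PC = CR_{max}$.

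In case (b), at step $t^{\star}-1$ we have $PC \in CR$ (same argument as above) together with $DMA_{en}$. Then LTL~\ref{eq:dma-at} forces $reset$, again contradicting the antecedent. Hence neither case is possible, so $AR = M$ must persist throughout the window, which is exactly the conclusion of the lemma.

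The main obstacle I expect is not the case analysis itself, which is routine, but the LTL bookkeeping needed to formally conclude that $PC \in CR$ \emph{throughout} the entire ``until'' window. The statement only tells us $PC = CR_{min}$ at the start and $PC = CR_{max}$ at the end; one must chain LTL~\ref{eq:at1} and LTL~\ref{eq:at2} with $\neg reset$ to do an induction on time showing that $PC$ cannot leave $CR$ in between without either reaching $CR_{max}$ or triggering reset. This is the step most naturally discharged by Spot's proof assistant (as the authors indicate) rather than by hand, since it requires carefully unfolding the $\textbf{U}$ operator step-by-step and applying the \textbf{X}-level invariants at each point.
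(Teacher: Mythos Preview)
Your proposal is correct and mirrors the paper's own argument: both establish that $PC \in CR$ throughout the window via LTL~(\ref{eq:at1}), then case-split on CPU versus DMA writes to $AR$ and discharge each case using LTL~(\ref{eq:stackrule2}) together with the disjointness of $AR$ from $XS$ and $MR$, and LTL~(\ref{eq:dma-at}), respectively. Your additional invocation of LTL~(\ref{eq:at2}) is harmless but unnecessary---since execution starts at $CR_{min} \in CR$, only the exit invariant LTL~(\ref{eq:at1}) is needed to keep $PC$ inside $CR$ until $CR_{max}$, which is precisely the minimal set the paper uses.
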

\end{mdframed}
\end{figure}

In turn, Lemma~\ref{lemma1l} can be proved by:
\begin{equation}\label{eq:lemma1}
	\scriptsize
	 LTL_{\ref{eq:at1}} \land LTL_{\ref{eq:stackrule2}} \land LTL_{\ref{eq:dma-at}} \rightarrow Lemma~\ref{lemma1l}
\end{equation}

The reasoning for Equation~\ref{eq:lemma1} is as follows: 
\begin{compactitem}
\item $LTL_{\ref{eq:at1}}$ prevents the CPU from stopping execution of \sw before its last instruction.
\item $LTL_{\ref{eq:stackrule2}}$ guarantees that the only memory regions written by the CPU during \sw execution are $XS$ and $MR$, which do not overlap with $AR$.
\item $LTL_{\ref{eq:dma-at}}$ prevents DMA from writing to memory during \sw execution.
\end{compactitem}
Therefore, there are no means for modifying $AR$ during \sw execution, implying Lemma~\ref{lemma1l}.
As discussed above, it is easy to see that:
\begin{equation}
	\scriptsize
	Lemma~\ref{lemma1l} \land LTL_{\ref{eq:irq}} \land Definition~\ref{sw_fc} \rightarrow Theorem~\ref{ra_soundness}
\end{equation}
\subsection{\RA Security Proof}
Recall the definition of \RA security in the game in Figure~\ref{fig:security_game}.
The game makes two key assumptions:
\begin{enumerate}
\item \sw call results in a temporally consistent \hmac of $AR$ using a key derived from \attkey and $\chal$. This is already proved by \acron's soundness.
\item \adv\ never learns \attkey with more than negligible probability.
\end{enumerate}
\begin{figure}[!ht]
\begin{mdframed}
\scriptsize
\begin{lemma}{Key confidentiality -- \attkey can not be accessed directly by untrusted software ($\neg(PC \in CR)$) and any memory written to by \swtiny can never be read by untrusted software.}\label{lemma:key-leak}
\begin{align*}
\begin{split}
& \textbf{G}:\{ \\
& (\neg(PC \in CR) \land Read\_Mem(i) \land i \in KR \rightarrow reset) \land \\
& (\dmaen \land DMA_{addr} = i \land i \in KR \rightarrow reset) \land \\
& [\neg reset \land PC \in CR \land Modify\_Mem(i) \land \neg(i \in MR) \rightarrow \\
& \textbf{G}:\{(\neg(PC \in CR) \land Read\_Mem(i) \lor \dmaen \land DMA_{addr} = i)\\
& \rightarrow reset\}] \\
& \}
\end{split}
\end{align*}
\end{lemma}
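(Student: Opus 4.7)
The plan is to prove the lemma by splitting its body into the three implications and discharging each from a small subset of the already-verified LTL specifications together with the execution model (Definition \ref{exec_model}) and the memory model (Definition \ref{mem_model}). Since the outermost operator is \textbf{G}, it suffices to reason about an arbitrary time step $t$; the first two implications then reduce to pointwise case analysis, while the third requires additional reasoning to discharge a nested \textbf{G}.

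The first implication (untrusted read of $KR$ forces reset) unfolds $\text{Read\_Mem}(i)$ via clause 2 of Definition \ref{exec_model} into two cases: $R_{en} \land D_{addr} = i$, which with $i \in KR$ triggers LTL specification \ref{eq:AC_rule}; or $\dmaen \land \dmaaddr = i$, which triggers LTL specification \ref{eq:DMA_AC_rule}. The second implication is essentially the DMA half of this argument without the guard on $PC$ and follows directly from LTL specification \ref{eq:DMA_AC_rule}.

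The heart of the proof is the third implication, which combines a present-time antecedent (a write by \sw to an address $i$ outside $MR$) with a globally-quantified consequent (that $i$ is forever off-limits to untrusted software and DMA). Fix a time $t$ at which $\neg reset \land PC \in CR \land \text{Modify\_Mem}(i) \land \neg(i \in MR)$ holds. Unfolding $\text{Modify\_Mem}$ via clause 1 of Definition \ref{exec_model}, the DMA branch is ruled out because LTL specification \ref{eq:dma-at} would force $reset$, contradicting the assumption, so the write is a CPU write with $D_{addr} = i$. LTL specification \ref{eq:stackrule2} then forces $i \in XS$, since otherwise $reset$ would be raised. Crucially, $i \in XS$ is a time-independent predicate on the fixed constants $XS_{min}, XS_{max}$, so it persists at every future time $t' \geq t$. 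At any such $t'$, unfolding the consequent via Definition \ref{exec_model}, an untrusted CPU read of $i$ is dispatched by LTL specification \ref{eq:stackrule1}, while any DMA access to $i$ (read or write) is dispatched by LTL specification \ref{eq:dma-stackrule3}.

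The main obstacle I anticipate is not the case analysis itself, which is mechanical, but making the temporal scoping precise inside the proof assistant: specifically, showing that an implication of the form $\phi \rightarrow \textbf{G}(\psi \rightarrow reset)$ is sound once the witness for $\phi$ (here, $i \in XS$) has been extracted and recognized as an atomic predicate on fixed address constants rather than on mutable machine state. In Spot this should reduce to a straightforward instantiation of the LTL equivalences, but it is the step most deserving of care, since sloppy handling would leave the outer \textbf{G} unjustified.
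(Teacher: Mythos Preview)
Your proposal is correct and follows essentially the same approach as the paper: the paper establishes Lemma~\ref{lemma:key-leak} via a machine-checked LTL implication from exactly the six specifications you invoke (\ref{eq:AC_rule}, \ref{eq:stackrule1}, \ref{eq:stackrule2}, \ref{eq:DMA_AC_rule}, \ref{eq:dma-stackrule3}, \ref{eq:dma-at}), and your three-way decomposition together with the extraction of the time-independent fact $i \in XS$ from \ref{eq:stackrule2} and \ref{eq:dma-at} is precisely the informal reasoning that the paper omits for space and defers to the Spot proof.
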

\end{mdframed}
\end{figure}
By proving that \acron's design satisfies assumptions 1 and 2, we show that the capabilities of untrusted software (any DMA or CPU software other than \sw) on \dev are equivalent to the capabilities of \adv\ in \RA-game.
Therefore, we still need to prove item 2 before we can use such game to prove \acron's security.
The proof of \adv's inability to learn \attkey with more than negligible probability is facilitated by \textit{A6 - Callee-Saves-Register} convention stated in Section~\ref{approach}.
A6 directly implies no leakage of information through registers on the return of \sw. This is because, before the return of a function, registers must be restored to their state prior to the function call.
Thus, untrusted software can only learn \attkey (or any function of \attkey) through memory.
However, if untrusted software can never read memory written by \sw, it never learns anything about \attkey (not even through timing side channels since \sw is secret independent).
Now, it suffices to prove that untrusted software can not access \attkey directly and that it can never read memory written by \sw. These conditions are stated in LTL in Lemma~\ref{lemma:key-leak}.
We prove that \acron satisfies Lemma~\ref{lemma:key-leak} by writing a computer proof (available in~\cite{public-code}) for Equation~\ref{proof_lemma2}.
The reasoning for this proof is similar to that of \RA soundness and omitted due to space constraints.
\begin{equation}\label{proof_lemma2}
\scriptsize
LTL_{\ref{eq:AC_rule}} \land LTL_{\ref{eq:stackrule1}} \land LTL_{\ref{eq:stackrule2}} \land LTL_{\ref{eq:DMA_AC_rule}} \land LTL_{\ref{eq:dma-stackrule3}} \land LTL_{\ref{eq:dma-at}} \rightarrow \text{Lemma 2}
\end{equation}
We emphasize that Lemma~\ref{lemma:key-leak} does not restrict reads and writes to $MR$, since this memory is used for inputting \chal and receiving \sw result.
Nonetheless, the already proved \RA soundness and LTL~\ref{eq:at2} (which makes it impossible to execute fractions of \sw) guarantee that $MR$ will not leak anything, because at the end of \sw computation it will always contain an \hmac result, which does not leak information about \attkey.
After proving Lemma~\ref{lemma:key-leak}, the capabilities of untrusted software on \dev are equivalent to those of adversary \adv\ in \RA-game of Definition~\ref{sec_def}.
Therefore, in order to prove \acron's security, it remains to show a reduction from \hmac security according to the game in Definition~\ref{sec_def}. \vrased's security is stated and proved in Theorem~\ref{reduction}. 
\begin{theorem}\label{reduction}
\acron is secure according to Definition~\ref{sec_def} as long as \hmac is a secure \mac.
\end{theorem}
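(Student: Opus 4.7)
The plan is to prove Theorem~\ref{reduction} via a standard cryptographic reduction from \hmac existential unforgeability, with Lemma~\ref{lemma:key-leak} doing the heavy lifting of restricting \sadv to a pure oracle adversary. First I would combine Lemma~\ref{lemma:key-leak} with axiom A6 (Callee-Saves-Register) and the \rom immutability of $KR$ and $CR$ to argue that, on any execution trace, \sadv's view of \attkey is mediated entirely by the HMAC values written to $MR$ at the end of each \sw invocation; in particular, resets during \sw (P3) and DMA (LTL specs~\ref{eq:DMA_AC_rule}--\ref{eq:dma-at}) leak nothing extra. Thus \sadv is no stronger than an adversary with black-box oracle access to the function $f_{\attkey}(c, M) := HMAC(KDF(\attkey, c), M)$, evaluated on adversary-chosen $c$ and the current $AR$, which is exactly the attack surface that Definition~\ref{sec_def} already models.

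Next I would build a ppt reduction $\mathcal{B}$ that wins the \hmac existential unforgeability game whenever \sadv wins \RA-game. Given an HMAC oracle keyed by an unknown uniformly random $k^*$, $\mathcal{B}$ samples $\chal \gets \{0,1\}^l$ and a dummy master key $\attkey' \gets \{0,1\}^l$, and simulates \RA-game to \sadv: on each oracle call to \sw with current challenge region value $c$, if $c = \chal$ then $\mathcal{B}$ forwards the current $AR$ to its HMAC oracle and returns the response into $MR$; if $c \neq \chal$ it computes $HMAC(KDF(\attkey', c), AR)$ locally. When \sadv halts with $(M, \sigma)$, $\mathcal{B}$ outputs the same pair as its own forgery. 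If \sadv wins then $M \neq AR(t)$ at every call time $t$, which is precisely the set of messages $\mathcal{B}$ ever submitted to the HMAC oracle, so $(M, \sigma)$ is a fresh existential forgery against $k^*$.

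The main obstacle will be rigorously justifying the substitution $KDF(\attkey, \chal) \mapsto k^*$ inside the simulation. This step requires a hybrid argument invoking PRF security of HKDF (which itself is built from HMAC): since \attkey is never leaked to \sadv and \chal is revealed only in the Challenge phase, $KDF(\attkey, \chal)$ is computationally indistinguishable from a fresh uniform key, while the derived keys $KDF(\attkey, c)$ for $c \neq \chal$ must be simulated consistently via $\attkey'$ with the hybrid taken on the second KDF input rather than on the master key. A secondary subtlety is bridging the trace-level LTL guarantees of \acron to the game-based formulation of Definition~\ref{sec_def}: one must check that every admissible strategy of the game-based \sadv corresponds to a possible trace allowed by \acron's FSM composition, which follows from the machine model of Definitions~\ref{mem_model} and~\ref{exec_model} together with Lemma~\ref{lemma:key-leak}.
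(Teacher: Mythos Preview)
Your proposal is correct and, in fact, more carefully structured than the paper's own argument, though the two are driving at the same reduction.

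The paper's route is to package the key-derivation step into a modified MAC scheme \hmac{$'$} whose $\texttt{Gen}$ outputs $KDF(\attkey,\chal)$ for fresh $\chal$, assert in one line that \hmac{$'$} inherits existential unforgeability from \hmac\ because ``the outputs of \texttt{Gen} are indistinguishable from random,'' and then claim that \sadv's winning pair $(M,\sigma)$ is immediately a forgery against \hmac{$'$}. Notably, the paper's reduction never spells out how \hmac-\adv\ answers \sadv's oracle queries to \sw\ during the Setup and Challenge phases; it simply says \hmac-\adv\ ``asks \adv\ to produce the same output used to win \RA-game.'' Your decomposition makes this step explicit: route queries with $c=\chal$ to the external \hmac\ oracle and answer $c\neq\chal$ locally with a dummy master key $\attkey'$, then justify the swap via a PRF hybrid on $KDF$. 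That hybrid is exactly the content hidden inside the paper's one-line claim that \hmac{$'$} is secure, so conceptually the two proofs coincide; you have just unbundled the intermediate scheme \hmac{$'$} into the standard PRF-then-EUF two-step.

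What your approach buys is a clean, self-contained reduction directly to \hmac\ unforgeability (plus PRF security of $KDF$, itself an \hmac\ assumption), with the oracle simulation made explicit and the ``freshness'' of the forged message tied concretely to the winning condition $M\neq AR(t)$. What the paper's approach buys is brevity: by naming \hmac{$'$} it collapses the hybrid into a single appeal to MAC security, at the cost of leaving the simulation implicit. Your identification of the $KDF(\attkey,\chal)\mapsto k^*$ substitution as the main technical obstacle is exactly right, and your plan to bridge the LTL trace guarantees to the game via Definitions~\ref{mem_model}--\ref{exec_model} and Lemma~\ref{lemma:key-leak} matches what the paper does in the paragraphs preceding the theorem.
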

\begin{proof}
\textit{A \mac\ is defined as tuple of algorithms $\{\texttt{Gen}, \texttt{Mac},\texttt{Vrf}\}$.
For the reduction we construct a slightly modified \hmac{$'$}, which has the same \texttt{Mac} and \texttt{Vrf} algorithms as standard \hmac\, but $\texttt{Gen} \leftarrow KDF(\attkey, \chal)$ where $\chal \leftarrow\$\{0,1\}^l$.
Since $KDF$ function itself is implemented as a \texttt{Mac} call, it is easy to see that the outputs of \texttt{Gen} are indistinguishable from random. In other words, the security of this slightly modified construction follows from the security of \hmac itself.
Assuming that there exists \adv\ such that $Pr[\adv,\RA_{game}] > negl(l)$, we show that such adversary can be used to construct \hmac-\adv\ that breaks existential unforgeability of \hmac' with probability $Pr[$\hmac-\adv$,$\mac-game$] > negl(l)$.  To that purpose \hmac-\adv\ behaves as follows:
\begin{enumerate}
\item \hmac-\adv\ selects $msg$ to be the same $M \neq AR$ as in \RA-game and asks \adv\ to produce the same output used to win \RA-game.
\item \hmac-\adv\ outputs the pair ($msg$,$\sigma$) as a response for the challenge in the standard existential unforgeability game, where $\sigma$ is the output produced by \adv\ in step 1.
\end{enumerate}
By construction, ($msg$,$\sigma$) is a valid response to a challenge in the existential unforgeability \mac\ game considering \hmac{$'$} as defined above. Therefore, \hmac-\adv\ is able to win the existential unforgeability game with the same $> negl(l)$ probability that \adv\ has of winning \RA-game in Definition~\ref{sec_def}.}
\end{proof}

\section{Optional Verifier Authentication\label{vrf_auth}}
\begin{figure}[!hbtp]
\begin{lstlisting}[basicstyle=\tiny, numberstyle=\tiny]
void Hacl_HMAC_SHA2_256_hmac_entry() {
    uint8_t key[64] = {0};
    uint8_t verification[32] = {0};
    if (memcmp(CHALL_ADDR, CTR_ADDR, 32) > 0)
    {
	memcpy(key, KEY_ADDR, 64);
	
	hacl_hmac((uint8_t*) verification, (uint8_t*) key,
		  (uint32_t) 64, *((uint8_t*)CHALL_ADDR) ,
		  (uint32_t) 32);
		  
	if (!memcmp(VRF_AUTH, verification, 32)
	{
	    hacl_hmac((uint8_t*) key, (uint8_t*) key, 
	    	(uint32_t) 64, (uint8_t*) verification, 
	    	(uint32_t) 32);
	    hacl_hmac((uint8_t*) MAC_ADDR, (uint8_t*) key, 
	    	(uint32_t) 32, (uint8_t*) ATTEST_DATA_ADDR, 
	    	(uint32_t) ATTEST_SIZE);
	    memcpy(CTR_ADDR, CHALL_ADDR, 32);
	}
    }

    return();
}
\end{lstlisting}
\caption{\sw Implementation with \vrf authentication}\label{fig:sw_att_code_auth}
\end{figure}
Depending on the setting where \dev is deployed, authenticating the attestation request before executing \sw may be required.
For example, if \dev is in a public network, the adversary may try to communicate with it.
In particular, the adversary can impersonate \vrf and send fake attestation requests to \dev, attempting to cause denial-of-service.
This is particularly relevant if \dev is a safety-critical device.
If \dev receives too many attestation requests, regular (and likely honest) software running on \dev would not execute 
because \sw would run all the time. Thus, we now discuss an optional part of \acron's design suitable for such settings.
It supports and formally verifies authentication of \vrf as part of \sw execution.
Our implementation is based on the protocol in~\cite{brasser2016remote}.

Figure~\ref{fig:sw_att_code_auth} presents an implementation of \sw that includes \vrf authentication.
It also builds upon HACL* verified HMAC to authenticate \vrf, in addition to computing the authenticated integrity check. 
In this case, \vrf's request additionally contains an HMAC of the challenge computed using \attkey.
Before calling \sw, software running on \dev is expected to store the received challenge on a fixed address $CHALL\_ADDR$ and the corresponding received HMAC on $VRF\_AUTH$.
\sw discards the attestation request if (1) the received challenge is less than or equal to the latest challenge, or (2) HMAC of the received challenge is mismatched.
After that, it derives a new unique key using HKDF~\cite{krawczyk2010hmac} from \attkey and the received \hmac
and uses it as the attestation key. 

\hw must also be slightly modified to ensure security of \vrf's authentication.
In particular, regular software must not be able modify the memory region that stores \dev's counter.
Notably, the counter requires persistent and writable storage, because \sw needs to modify it at the end of each attestation execution. Therefore, $CTR$ region resides on FLASH.
We denote this region as:
\begin{itemize}
\item $CTR = [CTR_{min}, CTR_{max}]$;
\end{itemize}

LTL Specifications~(\ref{eq:AC_rule_auth}) and~(\ref{eq:DMA_AC_rule_auth}) must hold (in addition to the ones discussed in Section~\ref{verif}).
\begin{equation}\label{eq:AC_rule_auth}
\eqsize
\begin{split}
\text{\bf G}: \ \{
\neg (PC \in CR) \land W_{en} \land (D_{addr} \in CTR) \rightarrow reset \ \}
\end{split}
\end{equation}
\begin{equation}\label{eq:DMA_AC_rule_auth}
\eqsize
\begin{split}
& \text{\bf G}: \ \{ \dmaen \land (\dmaaddr \in CTR) \rightarrow reset \}
\end{split}
\end{equation}
LTL Specification~(\ref{eq:AC_rule_auth}) ensures that regular software does not modify \dev's counter, while (\ref{eq:DMA_AC_rule_auth}) ensures that the same is not possible via the DMA controller.
FSMs in Figures~\ref{fig:AC_FSM} and~\ref{fig:DMA_FSM}, corresponding to \hw access control and DMA sub-modules, must be modified to transition into $Reset$ state according to these new conditions.
In addition, LTL Specification~(\ref{eq:stackrule2}) must be relaxed to allow \sw to write to $CTR$.
Implementation and verification of the modified version of these sub-modules are publicly available at \acron's repository~\cite{public-code} as an optional part of the design.

\section{VRASED API}\label{API}
\lstset{language=C,
	basicstyle={\scriptsize\ttfamily},
	showstringspaces=false,
	frame=single,
	xleftmargin=2em,
	framexleftmargin=3em,
	numbers=left, 
	numberstyle=\tiny,
	commentstyle={\tiny\itshape},
	keywordstyle={\tiny\bfseries},
	keywordstyle=\color{blue}\tiny\ttfamily,
	stringstyle=\color{red}\tiny\ttfamily,
        commentstyle=\color{black}\tiny\ttfamily,
        morecomment=[l][\color{magenta}]{\#},
        breaklines=true
}
\begin{figure} \scriptsize
\begin{lstlisting}[basicstyle=\tiny, numberstyle=\tiny, xleftmargin=.05\textwidth, xrightmargin=.05\textwidth]
void VRASED (uint8_t *challenge, uint8_t *response) {
    //Copy input challenge to MAC_ADDR:
    memcpy ( (uint8_t*)MAC_ADDR, challenge, 32);

    //Disable interrupts:
    __dint();

    //Save current value of r5 and r6:
    __asm__ volatile("push    r5" "\n\t");
    __asm__ volatile("push    r6" "\n\t");

    //Write return address of Hacl_HMAC_SHA2_256_hmac_entry
    //to RAM:
    __asm__ volatile("mov    #0x000e,   r6" "\n\t");
    __asm__ volatile("mov    #0x0300,   r5" "\n\t");
    __asm__ volatile("mov    r0,        @(r5)" "\n\t");
    __asm__ volatile("add    r6,        @(r5)" "\n\t");

    //Save the original value of the Stack Pointer (R1):
    __asm__ volatile("mov    r1,    r5" "\n\t");

    //Set the stack pointer to the base of the exclusive stack:
    __asm__ volatile("mov    #0x1000,     r1" "\n\t");

    //Call SW-Att:
    Hacl_HMAC_SHA2_256_hmac_entry();

    //Copy retrieve the original stack pointer value:
    __asm__ volatile("mov    r5,    r1" "\n\t");

    //Restore original r5,r6 values:
    __asm__ volatile("pop   r6" "\n\t");
    __asm__ volatile("pop   r5" "\n\t");

    //Enable interrupts:
    __eint();

    //Return the HMAC value to the application:
    memcpy(response, (uint8_t*)MAC_ADDR, 32);
}
\end{lstlisting}
\caption{\acron's wrapper function.}\label{fig:API}
\end{figure}

\acron ensures that any violation of secure \RA properties is detected and causes the system to reset.
However, benign applications running on the MCU must also comply with \acron rules to execute successfully.

For example, suppose that a benign application receives an attestation request from \vrf.
It then needs to set up MCU software state before \sw can execute.
This includes: disabling interrupts, setting the stack pointer to the reserved secure stack, and storing the previous stack pointer,
in order to restore software state after \sw execution completes. If the application fails to do this, even though \RA security still holds, 
the system might reach an unexpected state due to incorrect set-up before or after \sw execution. For example, suppose that
interrupts were (erroneously) not disabled before calling \sw, and an interrupt occurs during \sw execution, thus aborting the application. 
Though not a violation of secure \RA properties, this can clearly harm the benign application.%

Furthermore, setting up execution environment for \sw requires knowledge of low-level architecture and assembly instructions 
in order to deal directly with register state. Assuming such knowledge might be unrealistic or unnecessary for a typical application 
developer, who codes applications using a high-level programming language, e.g., \texttt{C}.
To this end, we provide an API to \sw that takes care of necessary configuration on the application's behalf.
\acron API implements appropriate configurations, saves the application state, calls \sw, and resumes the software 
state after \sw execution. This makes all \acron specifics transparent to the application developer, making \RA an 
easy-to-use service: a simple function call.

\acron API implementation is shown in Figure~\ref{fig:API}. It starts by copying \vrf's \chal to the designated physical memory 
location that will be read by \sw.
Next, to comply with atomicity, it disables interrupts. Before calling \sw, the current value of the stack pointer\footnote{In MSP430,
the stack pointer is the register $r1$.} is saved and set to point to the base of the secure stack. This is necessary to comply with the 
Key Confidentiality property which ensures that \sw can only execute on the reserved stack. With that last step, software state is 
ready for \sw execution. After \sw execution, the original stack pointer value and values of the registers used to store the original stack 
pointer are restored and interrupts are enabled.
At this point, execution of the application can continue and the application can reply to \vrf's request with the attestation result.

\noindent \emph{\textbf{Remark.} \acron API makes it easy for programmers to comply with \hw requirements before calling \sw.
The API itself does not (and should not) provide any security properties, since it is executed before and after \sw invocation  
to save and resume application execution state. Such code is not part of \sw and resides in regular program memory,
where it is treated accordingly by \hw's access control rules.
In summary, it is \textbf{not part of} the verified design.
}

\section{FPGA Deployment and Sample Application\label{comparison}}

\acron's design can be synthesized and deployed in real IoT/CPS environments.
To demonstrate its practicality and ease of use we provide, as part of \acron's repository~\cite{public-code},
a ready-to-go synthesize-able version of the architecture for the commodity FPGA Basys3~\footnote{https://store.digilentinc.com/basys-3-artix-7-fpga-trainer-board-recommended-for-introductory-users/} (depicted in Figure~\ref{fig:fpga_pic}).
The design can be easily ported to other FPGA models by mapping the input and output ports accordingly in the Verilog's constraints file.

\begin{figure}[ht]
\centering
\resizebox{0.65\columnwidth}{!}{ 
\includegraphics[width=0.8\columnwidth]{./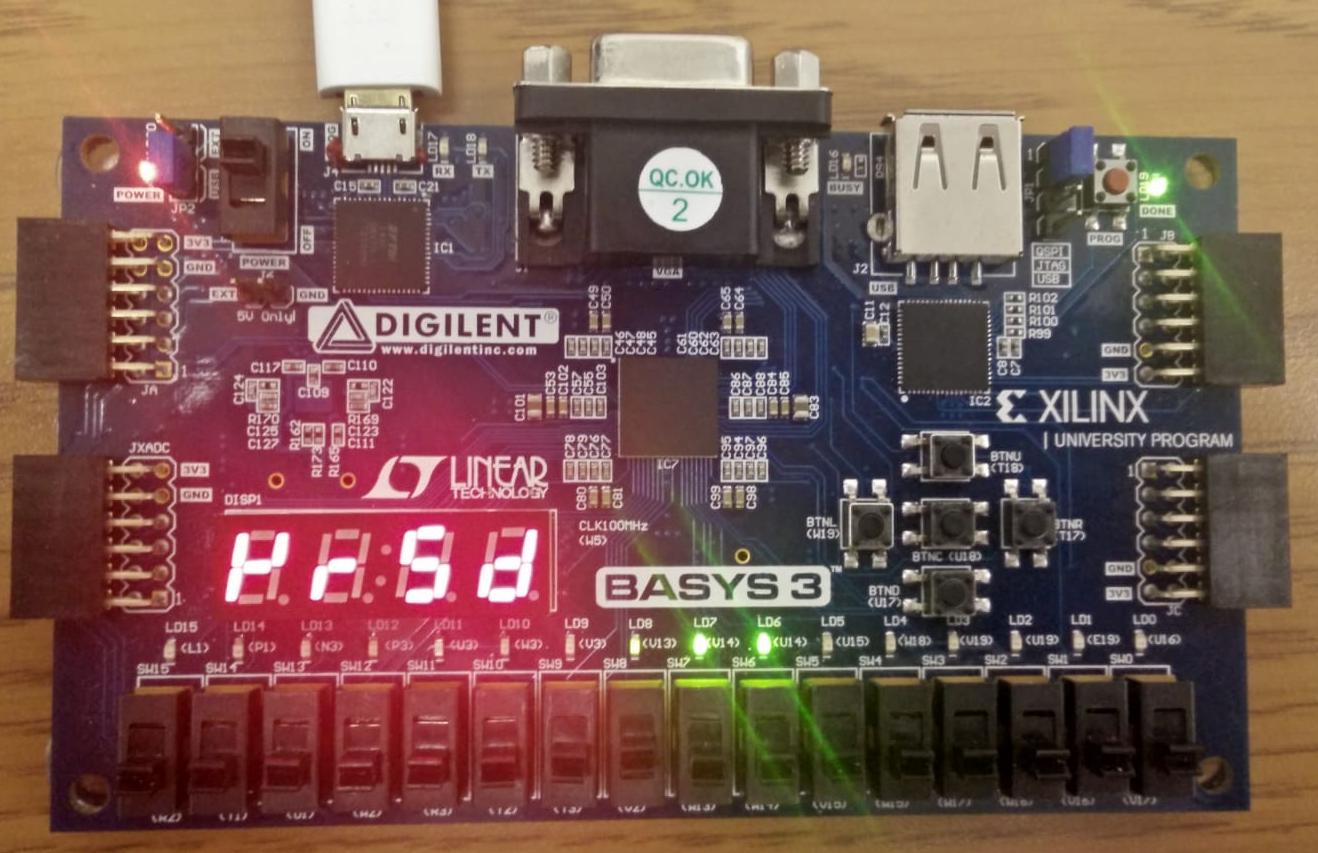}
}
	\caption{\small Basys3 FPGA running \acron's HW architecture depicted in Figure~\ref{fig:arch}}
	\label{fig:fpga_pic}
\end{figure}

\begin{figure}[!hbtp]
\begin{lstlisting}[basicstyle=\tiny, numberstyle=\tiny, xleftmargin=.12\textwidth, xrightmargin=.12\textwidth]
int main() {
  uint8_t challenge[32];
  uint8_t response[32];
  my_memset(challenge, 32, 1);

  P3DIR  =  0xFF;
  P3OUT  =  0x00;
  uint32_t count = 0; 
  volatile uint8_t buffer = 0;
  while (1) {
     while (count < 3000000) {
        count ++;    
     }
     count = 0;
     P3OUT++;
     if (P3OUT % 10 == 0) {
        buffer = P3OUT;
        P3OUT = 0xFF;
        VRASED(challenge, response);
        count = 0;
        P3OUT = buffer;
    }

  }
  return 0; 
}
\end{lstlisting}
\caption{Toy MSP430 application demo running \acron's \RA in real HW}\label{fig:toy_code}
\end{figure}

Figure~\ref{fig:toy_code} presents a toy sample application written in MSP430 \texttt{C}.
In it, $P3$ is an 8-bit General Purpose Input Output (GPIO) port which, in the synthesized HW, is connected to
LEDs 0-7 of Basys3 FPGA.
Lines 2-4 allocate buffers for the attestation challenge and response and initialize the challenge buffer.
In practice, the challenge is received from \vrf via communication channels such as MSP430 Universal Asynchronous Receiver/Transmitter (UART).
For the sake of clarity and brevity we omit the communication step from the example and set the challenge to a constant.
Line 6 in Figure~\ref{fig:toy_code} sets P3 GPIO as output (i.e., an actuator port) and line 7 initializes all 8 bits to zero, making all LEDs initially off.
The main application loop starts at line 10;
at every iteration an artificial delay of 3 million integer increments is introduced and then P3 output value is incremented.
This results in a binary counter being displayed on the 8 LEDs.
At every time the counter value reached a multiple of 10 (line 16), all LEDs turn on (line 18) and the \RA procedure is called
in line 19 (by default VRASED \RA is computed in the entire program memory, but the range is configurable and allows for data memory attestation as well).
The LEDs remain on until the end of \RA computation.
After completion of attestation, the attestation result is saved in the \texttt{response} buffer and the counter resumes.
In practical applications, such result can be reported back to \vrf (via UART) as an authenticated measurement of the device's state.
A demo video of this application running on real hardware and computing \RA in a small fraction of a second is available on \acron's repository~\cite{public-code}.

\end{document}